\title{Lower Bounds for Unambiguous Automata\\ via Communication Complexity} 
\titlerunning{Lower Bounds for Unambiguous Automata via Communication Complexity}
\author{Mika G\"o\"os}{EPFL, Switzerland
}{mika.goos@epfl.ch}{}{}
\author{Stefan Kiefer}{University of Oxford, UK
}{stefan.kiefer@cs.ox.ac.uk}{}{}
\author{Weiqiang Yuan}{EPFL, Switzerland
}{weiqiang.yuan@epfl.ch}{}{}
\authorrunning{M. G\"o\"os, S. Kiefer, W. Yuan} 
\keywords{Unambiguous automata, communication complexity} 
\DeclareUrlCommand{\Doi}{\urlstyle{sf}}
\renewcommand{\url}[1]{\href{#1}{\small\Doi{#1}}}
\crefname{equation}{}{}
\crefname{proposition}{Proposition}{Propositions}
\crefname{theorem}{Theorem}{Theorems}
\crefname{corollary}{Corollary}{Corollaries}
\crefname{algorithm}{Algorithm}{Algorithms}
\crefname{section}{Section}{Sections}
\crefname{lemma}{Lemma}{Lemmas}
\newcommand{\A}{\mathcal{A}}
\newcommand{\C}{\mathrm{C}}
\newcommand{\Cov}{\mathrm{Cov}}
\newcommand{\Disj}{\textsc{Disj}}
\newcommand{\enc}[1]{\langle #1 \rangle}
\newcommand{\N}{\mathbb{N}}
\newcommand{\Non}{\mathrm{N}}
\newcommand{\Par}{\mathrm{Par}}
\newcommand{\tO}{\tilde{O}}
\newcommand{\tOmega}{\tilde{\Omega}}
\newcommand{\UC}{\mathrm{UC}}
\newcommand{\Una}{\mathrm{U}}
\newcommand{\poly}{\mathrm{poly}}
\newcommand{\rank}{\mathrm{rk}}
\newcommand{\nrank}{\rank^+}
\newcommand{\anrank}[1]{\rank_{#1}^+}
\newcommand{\ndeg}{\deg^+}
\newcommand{\andeg}[1]{\deg^+_{#1}}
\newcommand{\ORf}{f^{\vee}}
\newcommand{\bN}{\mathbb{N}}
\newcommand{\bR}{\mathbb{R}}
\newcommand{\cC}{\mathcal{C}}
\newcommand{\bool}{\{0,1\}}
\begin{document}

\maketitle

\begin{abstract}
We use results from communication complexity, both new and old ones, to prove lower bounds for unambiguous finite automata (UFAs). We show three results.
\begin{enumerate}[topsep=3pt,noitemsep]
\item \emph{Complement:}
There is a language $L$ recognised by an $n$-state UFA such that the complement language $\overline{L}$ requires NFAs with $n^{\tilde{\Omega}(\log n)}$ states. This improves on a lower bound by Raskin.
\item \emph{Union:}
There are languages $L_1$, $L_2$ recognised by $n$-state UFAs such that the union $L_1\cup L_2$ requires UFAs with $n^{\tilde{\Omega}(\log n)}$ states.
\item \emph{Separation:}
There is a language $L$ such that both $L$ and $\overline{L}$ are recognised by $n$-state NFAs but such that $L$ requires UFAs with $n^{\Omega(\log n)}$ states. This refutes a conjecture by Colcombet.
\end{enumerate}
\end{abstract}

\section{Introduction} \label{sec:intro}

Given two finite automata recognising languages $L_1, L_2 \subseteq \Sigma^*$ a basic question is to determine the \emph{state complexity} of various language operations. How many states are needed in an automaton that recognises the union $L_1 \cup L_2$? How about the intersection $L_1 \cap L_2$? The complement~$\overline{L}_1\coloneqq\Sigma^* \setminus L_1$? The answer depends on the type of automaton considered, such as deterministic~(DFA), nondeterministic (NFA), or unambiguous (UFA). Recall that a UFA is an NFA that has at most one accepting computation on any input. 

State complexities have been extensively studied for various types of automata and language operations; see, e.g., \cite{GaoMRY16,JirasekJS18} and their references, or the excellent compendium on Wikipedia~\cite{Wiki}.
For example, complementing an NFA with $n$~states may require $2^n$ states~\cite{Birget1993}, even for automata with binary alphabet~\cite{Jiraskova05}.
Surprisingly, several extremely basic questions about~UFAs remain open. For example, it was shown only in~2018 by Raskin~\cite{Raskin18} that the state complexity for UFA complementation is not polynomial: for any $n \in \N$ there exists a language $L$ recognised by an $n$-state UFA such that any UFA (or even NFA) that recognises~$\overline{L}$ has at least $n^{(\log \log \log n)^{\Omega(1)}}$ states. This superpolynomial blowup refuted a conjecture that it may be possible to complement UFAs with a polynomial blowup~\cite{Colcombet15}.

In this paper, as our main results, we prove three new blowup theorems.
\begin{restatable}[Complement]{theorem}{complement} \label{thm:complement}
For every $n \in \N$ there is a language $L\subseteq\{0,1\}^*$ recognised by an $n$-state UFA such that any NFA that recognises $\overline{L}$ requires $n^{\tOmega(\log n)}$ states.
\end{restatable}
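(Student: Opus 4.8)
The plan is to connect UFA/NFA state complexity to communication complexity measures and then invoke a known (or newly established) separation between these measures. Concretely, fix a "hard" Boolean function $f \colon \{0,1\}^k \times \{0,1\}^k \to \{0,1\}$ and encode a matching pair of inputs $(x,y)$ as a single word $w = \enc{x}\#\enc{y}$ over $\{0,1\}$ (with the separator simulated by a fixed gadget, since the alphabet is binary). One designs a language $L$ whose membership depends on $f(x,y)$ in such a way that:
\begin{enumerate}[topsep=3pt,noitemsep]
\item $L$ has an $n$-state UFA whenever $f$ has small \emph{unambiguous} communication complexity $\UC(f)$ — i.e., a small cover of the $0$-set (or $1$-set) of the communication matrix by combinatorial rectangles that partition it; the UFA reads $\enc{x}$, nondeterministically commits to a rectangle of Alice's side, and uses unambiguity of the rectangle partition to stay unambiguous;
\item any NFA for $\overline L$ yields a nondeterministic (one-sided cover) protocol for $\overline f$, so $\Non(\overline f) = \Cov(f)$ is at most $\log$ of the NFA size.
\end{enumerate}
Therefore a separation of the form $\Cov(f) \ge \UC(f)^{\tOmega(\log \UC(f))}$ — equivalently, unambiguous nondeterministic communication complexity is at most $t$ but covering $\overline f$ requires $t^{\tOmega(\log t)}$ — immediately gives the theorem with $n = 2^{\Theta(\UC(f))}$ (up to the polynomial encoding overhead, which only affects the $\tilde\Omega$).

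**Which $f$ and which separation.** I would take $f$ to be a composed or lifted function for which $\UC(f)$ is small by construction but $\Cov(\overline f)$ is provably large. A natural candidate is a gadget composition $\Disj$-type or an "intersection-emptiness" style predicate, or the characteristic function of a carefully chosen set system; the paper's phrasing "via communication complexity" and the $n^{\tOmega(\log n)}$ bound strongly suggests using the recent results separating unambiguous from nondeterministic (or covering) communication complexity — a quadratic-to-superpolynomial-in-the-exponent type gap. The key structural lemma I need is: unambiguous DNF/rectangle cover of $f$ of size $s$ $\Rightarrow$ any monotone cover of $\overline f$ has size $s^{\tOmega(\log s)}$. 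If such a communication separation is available off the shelf, the automata-theoretic wrapper is the routine part.

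**The translation lemmas (routine but load-bearing).** First direction: given the rectangle structure realizing $\UC(f)$, build the UFA by simulating "Alice's state" while reading $\enc x$ and "Bob's state" while reading $\enc y$, padding so the total state count is polynomial in $2^{\UC(f)}$ and in $k$; unambiguity follows because on the unique rectangle containing $(x,y)$ there is exactly one accepting run. Second direction: an NFA $\cN$ for $\overline L$ of size $m$ induces, for each input length, a set of rectangles (indexed by the state of $\cN$ at the cut point between $\enc x$ and $\enc y$) that cover exactly the $1$-inputs of $\overline f = $ the $0$-inputs of $f$; hence $\Cov(\overline f) \le \log m + O(\log k)$. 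Combining with the lower bound on $\Cov(\overline f)$ gives $m \ge 2^{\Cov(\overline f) - O(\log k)} \ge \UC(f)^{\tOmega(\log \UC(f))} = n^{\tOmega(\log n)}$.

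**Main obstacle.** The hard part will be establishing the communication-complexity separation $\Cov(\overline f) \ge \UC(f)^{\tOmega(\log \UC(f))}$ — this is where the real work (and presumably the "new results" promised in the abstract) lies; the standard cover-size lower bound techniques (fooling sets, rank, corruption) are typically not strong enough to produce a super-polynomial gap in the exponent, so one likely needs a function built to have a very efficient unambiguous partition yet high one-sided cover complexity, proved via a fractional-cover / LP-duality or a direct combinatorial-design argument. A secondary technical nuisance is handling the binary alphabet cleanly: one must make sure the $\#$-separator gadget and the self-delimiting encoding $\enc{\cdot}$ do not blow up the UFA beyond $\poly$ factors and do not accidentally destroy unambiguity — this is bookkeeping, but it must be done carefully so the final bound is genuinely $n^{\tOmega(\log n)}$ in the UFA size $n$.
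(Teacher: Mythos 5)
Your overall route is the same as the paper's (automata $\to$ protocols for the lower bound, plus a quadratic-in-the-exponent separation between unambiguous and co-nondeterministic communication), and your second translation lemma is exactly the paper's Lemma~\ref{lem:NFA-CC}: an $s$-state NFA for $\overline{L}$ gives a cover of $F^{-1}(0)$ by at most $s$ rectangles (the only bookkeeping you omit is restricting $\overline{L}$ to the fixed input length, a $\poly$ factor). But your first translation lemma --- ``given the rectangle structure realizing $\UC(f)$, build the UFA \dots state count polynomial in $2^{\UC(f)}$ and in $k$'' --- is false in general, and this is a genuine gap, not routine wrapping. An unambiguous protocol only gives a partition of $F^{-1}(1)$ into \emph{arbitrary} rectangles $A_i\times B_i$, and an automaton must verify membership in $A_i$ while streaming over the bits of $x$; for an arbitrary $A_i\subseteq\{0,1\}^k$ this can cost $2^{\Omega(k)}$ states. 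Concretely, take $F(x,y)=1$ iff $x\in A$ for a random $A\subseteq\{0,1\}^k$: its $1$-set is a single rectangle $A\times\{0,1\}^k$ (unambiguous communication complexity essentially zero), yet any NFA for the language $\{xy : x\in A\}$ needs $2^{\Omega(k)}$ states (cut the word in the middle of $x$ and count rectangles needed to cover that matrix). So small unambiguous \emph{communication} complexity does not give a small UFA; the implication between automata and protocols only goes in the direction of your second lemma.

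The paper avoids this by keeping the upper bound in the query/DNF world rather than the communication world: it takes the function $f$ of Balodis et al.\ with $\UC_1(f)\le k$ and $\C_0(f)\ge\tOmega(k^2)$, composes with the lifting gadget to get $F=f\circ g^n$, and builds the UFA directly from the resulting unambiguous DNF of \emph{width} $\tO(k)$ --- there are only $n^{O(bk)}=2^{\tO(k)}$ conjunctions and each is checked with $O(bn)$ states, giving $N=2^{\tO(k)}$ states, while the lifted co-nondeterministic lower bound $\Non_0(F)=\tOmega(k^2)$ gives the $N^{\tOmega(\log N)}$ NFA bound for the complement. So your plan is repairable, but only if you commit to a lifted function and derive the UFA from its width-bounded unambiguous DNF (or some equivalent syntactic structure), not from $\UC(F)$. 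Two smaller points: the communication separation you flag as the ``main obstacle'' is in fact available off the shelf (Balodis et al.\ plus the nondeterministic lifting theorem), which is precisely what the paper uses; and your quantitative phrasing mixes up counting and logarithmic measures (in bits the target is unambiguous complexity $t$ versus co-nondeterministic complexity $\tOmega(t^2)$, i.e.\ cover size $2^{\tOmega(t^2)}$, not $t^{\tOmega(\log t)}$).
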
%
\vspace{-3mm}
\begin{restatable}[Union]{theorem}{union} \label{thm:union}
For every $n \in \N$ there are languages $L_1,L_2\subseteq\{0,1\}^*$ recognised by $n$-state UFAs such that any UFA that recognises $L_1\cup L_2$ requires $n^{\tOmega(\log n)}$ states.
\end{restatable}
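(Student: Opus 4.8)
The plan is to route the argument through communication complexity, as in \cref{thm:complement}, but now measuring UFA size by the \emph{unambiguous cover number} (equivalently, nonnegative rank) of a communication matrix, rather than by the nondeterministic cover number used for complementation. Fix a rigid self-delimiting binary format $\enc{x}\enc{y}$ for pairs of strings (encoding the delimiter costs only a constant factor in the automata), and for a language $L\subseteq\{0,1\}^*$ and length $\ell$ let $M_L^\ell$ be the Boolean matrix with rows and columns indexed by $\{0,1\}^\ell$ and $M_L^\ell[x,y]=1$ iff $\enc{x}\enc{y}\in L$. An $s$-state UFA for $L$ induces, for every $\ell$, a partition of the $1$-entries of $M_L^\ell$ into at most $s$ rectangles --- one per state $q$, consisting of the pairs $(x,y)$ for which some run reaches $q$ on $\enc{x}$ and $q$ accepts on $\enc{y}$; unambiguity makes this a partition, not merely a cover --- whence $\UC(M_L^\ell)\le s$ and $\nrank(M_L^\ell)\le s$. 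Since $M_{L_1\cup L_2}^\ell=M_{L_1}^\ell\vee M_{L_2}^\ell$ (entrywise OR), the theorem reduces to producing languages whose matrices $M_1,M_2$ have cheap rectangle partitions while $\UC(M_1\vee M_2)$ is enormous. Ordinary rank cannot witness such a gap, since $\rank(M_1\vee M_2)\le\rank(M_1)+\rank(M_2)+\rank(M_1)\rank(M_2)$; the whole point is that $\UC$ and $\nrank$ are \emph{not} closed under OR. The textbook instance of an OR that is cheap for the nondeterministic cover number but expensive for $\UC$ is the non-disjointness matrix $\overline{\Disj}=\bigvee_i(x_i\wedge y_i)$ --- but there one writes a hard matrix as the OR of \emph{many} pieces, whereas here it must be the OR of only \emph{two} pieces, each moreover realizable by a small UFA.

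Accordingly I would prove, as a separate communication-complexity statement, that for a parameter $m$ there exist Boolean matrices $M_1,M_2$ such that each $M_i$ has an ``automaton-friendly'' rectangle partition of size $\poly(m)$ (each rectangle $A\times B$ with $A,B$ accepted by $\poly(m)$-state DFAs), yet every rectangle partition of the $1$-entries of $M_1\vee M_2$ has size $m^{\tOmega(\log m)}$; equivalently, $\log\UC$ is $O(\log m)$ for the parts but $\tOmega(\log^2 m)$ for their OR --- a near-quadratic gap on the logarithmic scale. To obtain $M_1,M_2$ I would lift a query-complexity separation. First produce a Boolean function $h=h_1\vee h_2$ on $b$ variables whose $1$-set is the union of two subcube partitions of size $\poly(b)$ but admits no subcube partition of size below $b^{\tOmega(\log b)}$ --- a near-maximal failure of closure under OR for the subcube-partition number; the two halves have to overlap heavily, so that neither one's subcube structure is of any use in partitioning the union. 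Then compose with a suitably sized gadget $g$ and take $M_i$ to be the communication matrix of $h_i\circ g$: a lifting theorem for the unambiguous cover / partition number converts the query-side separation into the matrix-side separation, using crucially that lifting commutes with OR, $(h_1\vee h_2)\circ g=(h_1\circ g)\vee(h_2\circ g)$, so the matrix union is exactly the lift of the query union.

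Finally I would encode $h_i\circ g$ as a language $L_i\subseteq\{0,1\}^*$ --- Alice's and Bob's inputs written consecutively in the fixed format --- with $g$ and the encoding chosen so that the cheap rectangle partition of $M_i$ is implemented by a $\poly(m)$-state UFA that guesses a rectangle and verifies the two halves with the corresponding small DFAs; writing $n=\poly(m)$, this yields $n$-state UFAs for $L_1$ and $L_2$, while $\UC(M_{L_1\cup L_2}^\ell)\ge m^{\tOmega(\log m)}=n^{\tOmega(\log n)}$ at the relevant length $\ell$ forces the claimed blow-up for any UFA recognising $L_1\cup L_2$. The main obstacle, I expect, is the query-complexity core: exhibiting an explicit $h=h_1\vee h_2$ with polynomial subcube-partition number for each $h_i$ but superpolynomial (indeed $b^{\tOmega(\log b)}$) for $h$, in a form robust enough to survive both the lifting theorem (which usually imposes structural conditions on the outer function) and the encoding (which needs the easy pieces to remain readable left-to-right after composition with $g$). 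A secondary, more routine point is excluding a spurious small UFA for $L_1\cup L_2$ created by the binary encoding --- i.e., checking that any UFA for $L_1\cup L_2$ can be parsed at the format boundary into a small unambiguous rectangle partition of $M_{L_1\cup L_2}^\ell$, which a sufficiently rigid format ensures.
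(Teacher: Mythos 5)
Your high-level architecture matches the paper's (UFA $\Rightarrow$ unambiguous rectangle partition via \cref{lem:UFA-CC}, then lift a query-side ``OR is hard for subcube partitions'' statement through a gadget), but the proposal has a genuine gap exactly at its load-bearing step. You invoke ``a lifting theorem for the unambiguous cover / partition number'' to transfer the subcube-partition lower bound for $h=h_1\vee h_2$ to a rectangle-partition lower bound for $h\circ g$. No such theorem is known: whether $\Una_1(f\circ g^n)\geq\Omega(\UC_1(f))$ is an open problem, and the paper explicitly routes around it by passing to the linear-programming relaxation --- conical juntas ($\deg^+$) on the query side and (approximate) nonnegative rank on the communication side --- for which lifting theorems do exist (\cref{thm:ndeg-lifting}). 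This substitution is not cosmetic: because the available lifting theorem only lower-bounds the $\epsilon$-\emph{approximate} nonnegative rank, the query-side hardness must be proved against $\epsilon$-approximate conical juntas, which is strictly more demanding than ruling out exact subcube partitions of $h$. So even granting your encoding and automata-friendliness arguments (which are fine, and which the paper handles more simply by taking $L_1,L_2$ to be two padded copies of the same composed function $F$ on disjoint halves of a fixed-length input), your plan as stated rests on an unavailable tool.

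The second gap is that you defer the query-complexity core --- exhibiting $h=h_1\vee h_2$ with cheap pieces and an expensive union --- as ``the main obstacle,'' and this is precisely the paper's main technical contribution. The paper takes $h_1,h_2$ to be two copies of a single $f$ with $\UC_1(f)\le m$ and proves $\andeg{1.5\times 10^{-5}}(\ORf)\ge\tOmega(m^2)$ (\cref{thm:hard-or}) by a \emph{reduction}: computing the OR of two copies with a conical junta is at least as hard as computing the negation $\neg f$ (\cref{neg_to_or}), so one can import the known negation-hardness of \cite{GJPW18} (\cref{lem:gjpw}). The reduction itself is a short dual argument --- from a dual certificate $\Phi$ for $2-f$ one builds $\Phi^\lor(x,y)=-\Phi(x)\Phi(y)$ for $\ORf$ (\cref{negp_to_or}), plus a powering trick to relate $1+\neg f$ to $\neg f$ with controlled error loss (\cref{neg_to_negp}). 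Without this idea (or some substitute construction meeting the approximate-conical-junta requirement imposed by \cref{thm:ndeg-lifting}), the proposal does not yield a proof of \cref{thm:union}.
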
%
\vspace{-3mm}
\begin{restatable}[Separation]{theorem}{separation} \label{thm:separation}
For every $n\in \N$ there is a language $L\subseteq\{0,1\}^*$ such that both $L$ and $\overline{L}$ are recognised by $n$-state NFAs but any UFA that recognises $L$ requires $n^{\Omega(\log n)}$ states.
\end{restatable}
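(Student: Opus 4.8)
The plan is to reduce the statement to a separation in two-party communication complexity and then realise that separation by an explicit language. The bridge is the usual dictionary between automata and communication. Fix an even length $2\ell$ and split point $\ell$, and let $M_\ell$ be the Boolean matrix indexed by $x,y\in\{0,1\}^\ell$ with $M_\ell[x,y]=1$ iff $xy\in L$. Grouping the accepting runs of an $n$-state NFA for $L$ by the state visited at the midpoint covers the $1$-entries of $M_\ell$ by at most $n$ combinatorial rectangles; doing the same with an $n$-state NFA for $\overline L$ covers the $0$-entries by at most $n$ rectangles; and an $s$-state \emph{unambiguous} automaton for $L$ does strictly better on the $1$-side: uniqueness of the accepting run makes these rectangles \emph{pairwise disjoint}, so the $1$-entries are \emph{partitioned} into at most $s$ rectangles. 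Since the $1$-partition number only decreases when passing to a submatrix, it will suffice that $M_\ell$ \emph{contain}, at one carefully chosen length, a hard communication problem. For the opposite direction I must keep in mind that the rectangle sides produced by an automaton are themselves recognised by sub-automata, so small NFAs for $L$ and $\overline L$ require covers whose rectangle sides are \emph{automaton-cheap} languages, not merely covers that are small as set systems.

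The target is therefore a family of communication problems $f_k\colon X_k\times Y_k\to\{0,1\}$, together with explicit automaton-cheap encodings of $X_k$, $Y_k$ and of the rectangles of the relevant protocols, such that: (i) both $f_k$ and its negation have nondeterministic protocols of cost $O(k)$ whose accepting rectangles are automaton-cheap --- this will give $n$-state NFAs for $L$ and for $\overline L$ with $n=2^{\Theta(k)}$ --- while (ii) every partition of the $1$-entries of $f_k$ into monochromatic rectangles uses $2^{\Omega(k^2)}$ of them --- this will force any UFA for $L$ to have $2^{\Omega(k^2)}=n^{\Omega(k)}=n^{\Omega(\log n)}$ states once we set $k=\Theta(\log n)$. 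The combinatorial heart is thus a \emph{quadratic} gap between the unambiguous (i.e.\ $1$-partition) complexity and the larger of the nondeterministic $1$- and $0$-complexities. Because $\rD(f)=O(\Non^1(f)\cdot\Non^0(f))$ and a deterministic protocol partitions the matrix into $2^{\rD(f)}$ monochromatic rectangles, a quadratic gap is the best one could hope for; I would aim to meet it by an explicit construction --- for instance a recursively built matrix in the style of the pointer functions used to separate deterministic from other complexity measures, or a graph whose biclique cover number is small both for the graph and for its complement while its biclique partition number is quasipolynomially larger --- chosen to avoid lossy steps such as query-to-communication lifting, which is what keeps the exponent at $\Omega(\log n)$ rather than $\tOmega(\log n)$.

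What remains is assembly. For a target $n$ I would put $k=\Theta(\log n)$, take the hard length to be $2\ell$ with $\ell=\Theta(k^2)=\Theta(\log^2 n)$, plant $f_k$ as a submatrix of $M_\ell$, and pad so that strings of other lengths neither damage the $O(k)$-cost protocols (hence the $n$-state NFAs) at any length nor add $1$-entries to $M_\ell$ that could help partition it. Applying the unambiguity-to-partition argument at length $2\ell$ then turns (ii) into the claimed $n^{\Omega(\log n)}$ lower bound on UFA size, and (i) supplies the $n$-state NFAs for $L$ and $\overline L$.

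The step I expect to be the main obstacle is establishing (i) and (ii) \emph{simultaneously}. Making only one side cheap --- say keeping $\Non^1(f_k)$ small while letting $\Non^0(f_k)$ blow up --- is the easier phenomenon underlying \cref{thm:complement}, and Alon--Saks--Seymour-type graphs already exhibit it. Here one instead needs \emph{both} $f_k$ and $\neg f_k$ to be nondeterministically cheap, with automaton-friendly rectangles on both sides, while the $1$-partition number stays quasipolynomially large; engineering a combinatorially explicit, automata-encodable object that balances small cover complexity on both sides against large partition complexity on one side is the delicate point, and verifying the automaton-cheapness of the covers --- so that the abstract separation becomes a genuine state bound --- is the accompanying technical burden.
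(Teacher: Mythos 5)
There is a genuine gap. Your reduction to communication complexity is exactly the right frame (midpoint-state rectangles, covers of the $1$- and $0$-entries from the two NFAs, a \emph{disjoint} cover of the $1$-entries from the UFA), and your target specification is correct: a function $f_k$ with nondeterministic and co-nondeterministic cost $O(k)$, automaton-realisable on both sides, but $1$-partition number $2^{\Omega(k^2)}$. But you never exhibit such an $f_k$; you explicitly defer it (``pointer-function-style recursion'', ``biclique cover vs.\ partition gaps'') and flag it yourself as the main obstacle. That object is the entire content of the theorem, so what you have is a correct reduction plus an unproved conjecture, not a proof. Note also that your fallback candidates point in the wrong direction: Alon--Saks--Seymour-type constructions give cheap covers on one side only, and you give no mechanism for proving the $2^{\Omega(k^2)}$ partition lower bound for whatever matrix you would eventually build.

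The paper closes exactly this gap with a classical, off-the-shelf object: sparse set disjointness $\Disj^n_k$ with $k=\lceil\log_2 n\rceil$, encoded as the language $\enc{\Disj^n_k}$ of pairs of characteristic vectors. The lower bound needs no bespoke partition argument: the communication matrix of $k$-sparse disjointness has full rank $\binom{n}{k}$ (Kushilevitz--Nisan, Example~2.12), and since the $1$-partition number is at least the nonnegative rank, which is at least the rank, any UFA needs $\binom{n}{k}\ge(n/k)^k=n^{\Omega(\log n)}$ states. The ``automaton-cheap cover'' worry you raise is resolved concretely: the complement has a small NFA that guesses an index $i\in S\cap T$ and verifies it, and the language itself has a small NFA by Razborov's probabilistic argument --- there exist $\ell=n^{O(1)}$ sets $Z_1,\dots,Z_\ell\subseteq[n]$ such that every disjoint pair $(S,T)$ with $|S|=|T|=k$ has $S\subseteq Z_i$ and $Z_i\cap T=\emptyset$ for some $i$; the NFA hard-codes these sets, guesses $i$, and checks the two containments with polynomially many states. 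This also removes the need for the padding/planting step in your assembly: the language is the hard matrix itself, at a single input length, with words of the wrong format handled trivially. So the approach you sketch is compatible with the paper's, but the decisive step --- naming a concrete function with cheap covers on both sides and provably huge partition number, and realising its covers by small NFAs --- is missing from your argument.
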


\subparagraph{Discussion of main results.}
{\bf \cref{thm:complement}} upgrades Raskin's slightly-superpolynomial bound into a quasipolynomial bound $n^{\tOmega(\log n)}$. (Here we use the notation~$\tOmega(m)$ to suppress $\poly(\log m)$ factors.) However, we note that Raskin's language is unary, $|\Sigma|=1$, while ours is binary, $|\Sigma|=2$, and hence the two results are incomparable in this sense. As for positive results, it is known that the trivial $2^n$ upper bound for UFA complementation can be improved: the complement of any $n$-state UFA can be recognised by a UFA with at most~$\poly(n)\cdot2^{n/2}$ states~\cite{JirasekJS18,IK21}. Closing the exponential gap here between the lower and upper bounds remains a tantalising open problem. It was highlighted as one of the foremost challenges in the recent Dagstuhl workshop \emph{Unambiguity in Automata Theory}~\cite{Colcombet2021}.

\bigskip\noindent
{\bf \cref{thm:union}} establishes the first superpolynomial lower bound for the union operation. Letting $\sqcup$ denote disjoint union, observe that
\begin{equation}\label{eq:rewrite}
L_1\cup L_2 ~=~ L_1 \sqcup (L_2\cap \overline{L}_1).
\end{equation}
Since disjoint union and intersection are polynomial for UFAs, it follows from \cref{eq:rewrite} and \cref{thm:union} that the same~$n^{\tOmega(\log n)}$ lower bound holds for complementing UFAs. However, we stress that \cref{thm:complement} has a stronger conclusion than this, since it proves a lower bound against NFAs, not just UFAs. The observation~\cref{eq:rewrite} also yields the upper bound $\poly(n)\cdot 2^{n/2}$ by using the complement construction from~\cite{JirasekJS18,IK21}. 

\bigskip\noindent
{\bf \cref{thm:separation}} refutes a conjecture by Colcombet~\cite[Conjecture~2]{Colcombet15}. Indeed, he conjectured that for any pair of NFAs recognising languages $L_1$, $L_2$ such that $L_1\cap L_2=\emptyset$, there is a polynomial-sized UFA that recognises some $L$ that \emph{separates} $L_1$ and $L_2$ in the sense that~$L_1\subseteq L$ and $L\cap L_2=\emptyset$. \cref{thm:separation} refutes this even in the special case $L_1=\overline{L}_2$. Related separability questions are classical in formal language theory and have attracted renewed attention; see, e.g, \cite{CzerwinskiL19} and the references therein. Separating automata have also been used recently to elegantly describe quasipolynomial time algorithms for solving parity games in an automata theoretic framework; see \cite[Chapter~3]{automata-toolbox} and~\cite{CzerwinskiD19}.

\subsection{Technique: Communication complexity}

Our three main theorems rely on results---both new and old---in communication complexity; see~\cite{KushilevitzNisan,Rao2020} for the standard textbooks. In communication complexity, one studies functions of the form~$F\colon \{0,1\}^n \times \{0,1\}^n \to \{0,1\}$ that determine the following two-party communication problem: Alice holds $x\in \{0,1\}^n$, Bob holds $y\in \{0,1\}^n$, and their goal is to output $F(x,y)$ while communicating as few bits as possible between them. Communication complexity is a classical tool to prove lower bounds for automata. Indeed, it is well known that if the language~$\{xy:F(x,y)=1\}$ is recognised by a small DFA (resp.\ NFA, UFA) then $F$ admits an efficient deterministic (resp.\ nondeterministic, unambiguous) protocol. We revisit this connection in light of recent developments in communication complexity.

\bigskip\noindent
{\bf \cref{thm:complement}} is a \emph{\bf\itshape relatively straightforward consequence} of a recent result of Balodis et al.~\cite{Balodis2021FOCS}. They exhibited a two-party function whose co-nondeterministic communication complexity is nearly quadratic in its unambiguous complexity (which matches an upper bound due to Yannakakis~\cite{Yannakakis1991}). We translate this separation into the language of automata theory, virtually in a black-box fashion.

\bigskip\noindent
{\bf \cref{thm:union}}, by contrast, is our \emph{\bf\itshape main technical contribution}. We will show that it follows from the following analogous communication result, which we prove in this paper.
\begin{theorem}\label{thm:or}
For every $m\in\N$ there exists a function $F(x,y)$ with unambiguous communication complexity at most $m$ such that the logical-or of two copies of $F$, namely, $F^{\lor}(xx',yy')\coloneqq F(x,y)\lor F(x',y')$, has unambiguous communication complexity $\tOmega(m^2)$.
\end{theorem}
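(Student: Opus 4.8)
The plan is to construct $F$ as a carefully chosen communication problem whose unambiguous complexity is small — essentially linear in $m$ — but such that taking a logical-or of two independent instances forces the unambiguous complexity to blow up to roughly the square. The natural candidate for the underlying building block is a variant of the set-disjointness function $\Disj$, or more precisely a function whose $1$-inputs admit a partition into a small number of combinatorial rectangles (giving a cheap unambiguous protocol, via the standard correspondence $\UC(F) = \lceil\log \Cov_1^{\text{unamb}}(F)\rceil + O(1)$), yet whose structure is "rigid" under taking unions. Concretely, I would look for $F$ whose $1$-cover has a unique (or nearly unique) minimum unambiguous partition into $2^{\Theta(m)}$ rectangles, so that any unambiguous partition of the $1$-set of $F^\lor$ is pinned down, forcing it to contain $\approx 2^{\Theta(m)}\cdot 2^{\Theta(m)} = 2^{\Theta(m^2)}$ pieces and hence $\UC(F^\lor) = \tOmega(m^2)$.

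First I would recall the precise dictionary between unambiguous communication protocols and unambiguous rectangle covers: an unambiguous protocol of cost $c$ for $F$ yields a partition of $F^{-1}(1)$ into at most $2^c$ rectangles, and conversely (up to the usual $\log$-of-rectangle-count overhead) a partition into $R$ rectangles gives a protocol of cost $\approx \log R$. The same statement holds with $0$ and $1$ swapped on the co-nondeterministic side, but here I will work entirely with the $1$-side. So the theorem reduces to: exhibit $F$ with $\Cov_1^{\text{part}}(F) \le 2^{O(m)}$ while $\Cov_1^{\text{part}}(F^\lor) \ge 2^{\tOmega(m^2)}$. For the easy direction, $F^\lor$ on input $(xx',yy')$ outputs $1$ iff $F(x,y)=1$ or $F(x',y')=1$; its $1$-set is $(F^{-1}(1)\times \text{all}) \cup (\text{all}\times F^{-1}(1))$, and I would first argue the upper side — that any partition of $F^{-1}(1)$ into $r$ rectangles naively only yields a partition of $F^\lor$'s $1$-set into $O(r^2)$ or so rectangles via inclusion–exclusion on the two copies, but crucially NOT fewer; the work is in the matching lower bound.

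The key technical step — and this is where I expect the real fight — is the \emph{lower bound} on $\Cov_1^{\text{part}}(F^\lor)$. The strategy is a \emph{fooling-set-style} or \emph{rank/entropy} argument that shows the two copies cannot share partition structure: one wants to find a large "diagonal" set of inputs $\{(xx', yy')\}$ on which any valid rectangle of $F^\lor$ can cover only a bounded number of points, while the number of such points is $2^{\tOmega(m^2)}$. I would try to engineer $F$ so that its $1$-inputs, restricted to an appropriate subdomain, behave like a \emph{permutation matrix} or a \emph{Hadamard/inner-product} structure of size $2^{\Theta(m)}$ (this keeps $\UC(F)$ small while making rectangles in $F^{-1}(1)$ essentially singletons on that subdomain), and then observe that the $1$-set of $F^\lor$ restricted to the product subdomain contains a structure of size $2^{\Theta(m)}\times 2^{\Theta(m)}$ whose complement (the simultaneous-$0$ region) is a permutation-like "anti-diagonal", forcing any unambiguous — i.e. \emph{disjoint} — rectangle cover of the $1$-region to have $2^{\tOmega(m^2)}$ pieces. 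Making this rigorous amounts to a direct-sum / tensor lower bound for unambiguous cover number under $\lor$-composition; the polylogarithmic loss in $\tOmega$ is the price of whatever approximation or corruption-bound machinery (e.g.\ a smooth-rectangle or $\gamma_2$-type argument) is needed to handle the fact that a single rectangle of $F^\lor$ might "cheat" by covering many $1$-inputs that come from the first copy's being $1$ while the second copy's coordinate is arbitrary.

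The main obstacle, concretely, is that the naive fooling-set bound does not see the $\lor$-structure: a rectangle of $F^\lor$ that lives entirely inside $F^{-1}(1)\times(\text{everything})$ is cheap and covers a full slab, so one cannot simply count slab-by-slab. The resolution I would pursue is to pass to the \emph{co-side}: $F^\lor$'s $0$-set is exactly $F^{-1}(0)\times F^{-1}(0)$, a product, so I can use the (old, well-understood) fact that the unambiguous cover number of a product on the $0$-side interacts with the $1$-side partition via the identity that an unambiguous partition of the $1$-inputs, together with the product structure of the $0$-inputs, yields a \emph{deterministic protocol} for $F^\lor$ after a $\log$ overhead — and then invoke a near-quadratic \emph{deterministic-vs-unambiguous} separation for $\lor$-composed functions, which is exactly the flavour of result one expects to build on the Balodis et al.\ technology cited for Theorem~\ref{thm:complement}. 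In other words, I would try to derive Theorem~\ref{thm:or} from a composition theorem of the form $\rD(F^\lor) \gtrsim \UC(F)\cdot \Non(F) / \poly\log$ together with a choice of $F$ where $\UC(F) \approx \Non(F) \approx m$ but $\rD(F) \approx m^2$; the bulk of the technical work, and the source of the $\tOmega$ rather than $\Omega$, is proving that composition theorem for unambiguous complexity, which does not follow from standard deterministic composition and seems to require a fresh information-theoretic or rank-based argument tailored to the $\lor$ gadget.
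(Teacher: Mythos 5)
Your proposal does not contain a proof of the crucial lower bound on $\Una_1(F^\lor)$, and the route you fall back on cannot be made to work quantitatively. First, the step ``an unambiguous partition of the $1$-inputs plus the product structure of the $0$-inputs yields a deterministic protocol for $F^\lor$ with log overhead'' is unjustified: the $0$-set $\{F{=}0\}\times\{F{=}0\}$ is a product across the \emph{two copies}, not a rectangle in the Alice/Bob split, and the only known simulation of this kind is the Aho--Ullman--Yannakakis-style bound $\rD(F^\lor) = O\bigl(\Una_1(F^\lor)\cdot \Non_0(F^\lor)\bigr)$, which loses a factor $\Non_0(F^\lor)=\Theta(\Non_0(F))$. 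Second, even granting your unproven composition theorem for $\rD(F^\lor)$, the numbers cap out below the target: a lower bound on $\rD(F^\lor)$ says nothing directly about $\Una_1(F^\lor)$ (the inequality $\Una_1\le \rD$ goes the wrong way), so you must convert back through the $0$-side, giving at best $\Una_1(F^\lor)\ge \rD(F^\lor)/O(\Non_0(F^\lor))$; but $\rD(F^\lor)\le 2\rD(F)+O(1) \le O(\Non_0(F)\,\Non_1(F))$, so this route can never certify more than $O(\Non_1(F))\le O(\UC_1(F))\le O(m)$, far short of $\tOmega(m^2)$. Your earlier fooling-set/``rigid unique partition'' idea fails for the reason you yourself identify (slab rectangles of the form $\{F{=}1\}\times\text{everything}$ are cheap), so the central lower bound is simply missing.

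For comparison, the paper takes a different and essential detour through linear-programming relaxations: it bounds $\Una_1(F^\lor)\ge \log\anrank{\epsilon}(F^\lor)$ via approximate nonnegative rank, lifts this from the \emph{approximate nonnegative degree} (conical-junta degree) of a query function $f^\lor$ using the lifting theorem of G\"o\"os--Lovett--Meka--Watson--Zuckerman and Kothari--Meka--Raghavendra (no lifting theorem for $\UC_1$ itself is known, which is why the relaxation is needed), and proves the query lower bound $\andeg{\epsilon}(\ORf)\ge\tOmega(m^2)$ by a reduction showing that, for conical juntas, computing $\ORf$ is at least as hard as computing $\neg f$: a dual LP certificate $\Phi$ for $2-f$ tensorizes as $\Phi^\lor(x,y)\coloneqq -\Phi(x)\Phi(y)$ into a certificate for $\ORf$, and a powering trick relates $1+\neg f$ to $\neg f$; the base hardness of negating an unambiguous DNF is then imported from G\"o\"os--Jayram--Pitassi--Watson. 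None of these ingredients appears in your sketch, so as written the proposal has a genuine gap rather than an alternative proof.
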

This is a new result in communication complexity; the unambiguous complexity of $F^\lor$ has not been studied previously. We prove \cref{thm:or} using the popular \emph{query-to-communication lifting} technique that has been wildly successful in the past decade to prove communication lower bounds (including in~\cite{Balodis2021FOCS}). In this technique, one starts by proving a lower bound on the query (aka decision tree) complexity of a boolean function $f\colon\{0,1\}^n\to\{0,1\}$. A lifting theorem (e.g.,~\cite{GLMWZ16}) then transforms $f$ into an analogous communication problem $F$ in such a way that the communication complexity of $F$ is characterised by the query complexity of~$f$. This reduces the task of proving communication lower bounds into the much easier task of proving query lower bounds.

Interestingly, our proof of \cref{thm:or} formalises a kind of \emph{converse} to the observation~\cref{eq:rewrite} above (saying that union can be computed via a complement). Namely, we show that unambiguously computing the union necessarily requires computing a complement, and therefore we can rely on an existing query lower bound for complementation~\cite{GJPW18}.

\bigskip\noindent
{\bf \cref{thm:separation}}, finally, is a straightforward consequence of a classical quadratic separation between two-sided nondeterministic communication complexity and unambiguous communication complexity due to Razborov~\cite{Razborov1990}.

\subsection{Bonus result: Approximate nonnegative rank}

Along the way to \cref{thm:or} we inadvertently stumbled upon another separation result that addresses a question raised by Kol et al.~\cite{KMSY14}. They studied the \emph{$\epsilon$-approximate nonnegative rank} $\anrank{\epsilon}(M)$ of a nonnegative matrix $M\in \bR^{n\times n}$. Here, $\anrank{\epsilon}(M)$ is defined as the least nonnegative rank $\nrank(N)$ of a matrix $N\in \bR^{n\times n}$ such that $|M_{ij}-N_{ij}|\leq\epsilon$ for all $i,j$; see \cref{sec:union} for precise definitions. In particular, Kol et al.~\cite{KMSY14} asked whether for all error parameters $0<\epsilon<\delta<1/2$ and boolean matrices $M\in\bool^{n\times n}$ we have the polynomial relationship $\anrank{\epsilon}(M) \leq O(\anrank{\delta}(M)^C)$ where $C=C(\epsilon,\delta)$ is a constant. In short, does approximate nonnegative rank admit efficient error reduction? (It is known that the more usual notion, \emph{approximate rank}, does~\cite{Alon2003}.) We provide the following negative answer.
\begin{restatable}[No efficient error reduction]{theorem}{error} \label{thm:error}
For every $m\in\bN$ there exists a boolean matrix $M$ with $\anrank{1/4}(M)\leq m$ but such that \smash{$\anrank{10^{-5}}(M)\geq m^{\tOmega(\log m)}$}.
\end{restatable}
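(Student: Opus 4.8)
The plan is to leverage the same lifted function $F^\lor$ that powers \cref{thm:or}, together with the standard dictionary between communication complexity and matrix rank notions. Recall that the unambiguous communication complexity of a boolean function $G$ is, up to polynomial factors, the logarithm of the nonnegative rank of its communication matrix $M_G$ (a nondeterministic protocol is a cover by $1$-monochromatic rectangles that is a \emph{partition} on the $1$-entries, i.e.\ a nonnegative rank-$1$ decomposition). Moreover, the $\epsilon$-approximate nonnegative rank $\anrank{\epsilon}(M_G)$ plays the analogous role for the \emph{one-sided-error randomised} (more precisely, \WAPPCC{}/\,sub-$\epsilon$-approximate) model: a nonnegative matrix $N$ with $\|M_G - N\|_\infty \le \epsilon$ and $\nrank(N) = r$ gives, via the standard sampling argument, a protocol whose cost is $O(\log r)$ that approximates $G$ additively within $\epsilon$, and conversely. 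So the task reduces to exhibiting a boolean matrix $M$ whose $1/4$-approximate nonnegative rank is small (at most $2^{O(m)}$, reparametrising) but whose $10^{-5}$-approximate nonnegative rank is $2^{m^{\tOmega(\log m)}}$ — equivalently, a function that is cheap to approximate within error $1/4$ but expensive to approximate within error $10^{-5}$.

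First I would take $M$ to be the communication matrix of $F^\lor = F(x,y) \lor F(x',y')$, where $F$ is the gadget-lifted version of the pointer function (or the complementation-hard function of \cite{GJPW18}) used in the proof of \cref{thm:or}, with the gadget size tuned so that the unambiguous complexity of $F$ is $\Theta(m)$. For the \emph{upper bound} $\anrank{1/4}(M) \le 2^{O(m)}$: the point is that, although exactly computing (nonnegative-rank-decomposing) $F^\lor$ is expensive, \emph{approximating} it within constant error $1/4$ is cheap. Indeed $F^\lor(xx',yy') = 1$ unless both $F(x,y)=0$ and $F(x',y')=0$. Using that $F$ has an efficient unambiguous protocol — hence a small nonnegative-rank decomposition of $M_F$ — together with a small nonnegative-rank decomposition of $M_{\overline F}$ (which costs only $\poly$ blowup over $\nrank(M_F)$ in the \emph{approximate} regime via the Yannakakis-style/\cite{JirasekJS18} argument, or simply because the complement of a small UFA has a small \emph{co}-UFA and constant-error approximation washes out the side that is expensive), one builds an $N$ with $\nrank(N)=2^{O(m)}$ and $\|M - N\|_\infty \le 1/4$. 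The cleanest route is: $F^\lor$ has a small \emph{one-sided-error} (co-nondeterministic-flavoured) approximating decomposition of nonnegative rank $2^{O(m)}$, because $\overline{F^\lor} = \overline F(x,y)\land\overline F(x',y')$ is a conjunction of two cheap-to-cover complements, and constant additive error lets us absorb the ambiguity.

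For the \emph{lower bound} $\anrank{10^{-5}}(M) \ge 2^{m^{\tOmega(\log m)}}$: here is where the lifting machinery does the real work. A small $10^{-5}$-approximate nonnegative-rank decomposition of $M = M_{F^\lor}$ yields a low-cost protocol in the relevant small-error one-sided model for $F^\lor$; by the lifting theorem, this pulls back to a low-\emph{query}-complexity algorithm in the analogous decision-tree model for $f^\lor$, the two-fold OR of the base function $f$. But \cref{thm:or} (and the query lower bound of \cite{GJPW18} underlying it) is precisely the statement that $f^\lor$ requires query complexity $\tOmega(m^2)$ in this model — in fact the proof of \cref{thm:or} already establishes that unambiguously/approximately computing the OR forces one to compute a \emph{complement}, for which \cite{GJPW18} gives the quadratic query lower bound, and this is robust under passing from error $0$ to a small constant error like $10^{-5}$. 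Unwinding, $\log \anrank{10^{-5}}(M) = \tOmega(m^2)$ — wait, I should re-examine the exponents: the quasipolynomial gap in \cref{thm:union} and \cref{thm:error} comes from a \emph{recursive} (logarithmically-many-fold) composition, not a single OR; so in fact $M$ should be taken as the communication matrix of the $(\log m)$-fold iterated construction, and the lower bound chains the per-level quadratic loss to give $\log\anrank{10^{-5}}(M) \ge m^{\tOmega(\log m)}$ while each level only costs a constant-error-absorbable polynomial overhead on the upper-bound side, keeping $\anrank{1/4}(M) \le m$ after reparametrisation. The main obstacle I anticipate is the \emph{upper bound side}: one must verify that constant error $1/4$ genuinely collapses the iterated construction to small nonnegative rank — i.e.\ that the expensive "complementation" obligations only appear on the small-error side of the dichotomy — which requires care in how the gadget and the recursion interact with additive (rather than multiplicative) approximation of nonnegative matrices, and in checking that the $1/4$-vs-$10^{-5}$ threshold is placed strictly between the "cheap" and "expensive" regimes produced by the lifting argument. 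The lower bound, by contrast, should follow from \cref{thm:or} and its proof essentially mechanically, since approximate nonnegative rank is exactly the matrix-language shadow of the small-error one-sided communication model that the lifting theorem controls.
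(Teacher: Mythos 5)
There is a genuine gap, and it is the step where you ``re-examine the exponents'' and replace the single-OR matrix by a $(\log m)$-fold iterated composition. No iteration is needed: the quasipolynomial gap in \cref{thm:error} is purely a change of scale, exactly as in \cref{thm:complement,thm:union}. The paper takes $M$ to be the communication matrix of $F^\lor$ for the very function $F$ built in \cref{sec:proof-2}; there one already has $\log\anrank{10^{-5}}(F^\lor)\geq\tOmega(k^2)$ (this is \cref{eq:nrank-lower}) while $\log\anrank{1/4}(F^\lor)\leq O(k)$, where $k$ is the communication/width-scale parameter. Writing $m:=2^{\Theta(k)}$ for the rank-scale upper bound turns $2^{\tOmega(k^2)}$ into $m^{\tOmega(\log m)}$ --- the theorem's $m$ is a \emph{rank}, not a communication cost. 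Your proposed recursion is never defined or analysed (how errors accumulate over levels, what the composed base function is, why the upper bound survives iteration), so as written your final construction does not constitute a proof; worse, you abandoned the correct construction you started with.

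The upper-bound half of your plan is also off track. The paper's argument is the elementary averaging trick of \cref{cl:or}: take $G(xx',yy')\coloneqq (F(x,y)+F(x',y'))/2+1/4$, which is pointwise within $1/4$ of $F^\lor$ (it takes values $1/4,3/4,5/4$ against $0,1,1$) and satisfies $\nrank(G)\leq 2\cdot\nrank(F)+1\leq 2^{O(k)}$; no complementation of $F$ enters at all. Your ``cleanest route'' via $\overline{F^\lor}=\overline{F}\wedge\overline{F}$ and ``cheap-to-cover complements'' fails twice: complementing $F$ is precisely the expensive operation in this construction (that is what the lower bounds built on \cite{GJPW18} exploit), and even granting a small rectangle cover, a cover is not an additive-error nonnegative approximation --- overlapping rectangles can overshoot the target by an unbounded amount, and a constant additive error budget does not ``absorb'' that ambiguity. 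The lower-bound half of your plan (lifting the $\tOmega(m^2)$ bound on $\andeg{1.5\times 10^{-5}}(f^\lor)$ to $\log\anrank{10^{-5}}(F^\lor)$ via \cref{thm:ndeg-lifting}) does match the paper, but it applies to the single-OR matrix, not to the undefined iterated one.
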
%
Previously, a negative answer was known only for \emph{partial} boolean matrices $M\in\{0,1,*\}^{n\times n}$ that allow ``don't care'' entries $M_{ij}=*$~\cite{GLMWZ16}. Our \cref{thm:error} still leaves open the possibility (also raised by~\cite{KMSY14}) that, for a total boolean matrix $M$, we can bound $\anrank{\epsilon}(M)$ as a polynomial function of $\anrank{\delta}(M)+\anrank{\delta}(\overline{M})$ where $\overline{M}$ is the boolean complement.

\subsection{Open problems}

Our quasipolynomial lower bounds for automata are not known to be tight; in all cases the best known upper bounds are exponential. Curiously enough, the analogous communication results are tight for communication protocols. This suggests two opportunities.
\begin{itemize}[topsep=3pt]
\item Can other techniques from communication complexity improve the lower bounds further? Perhaps via multi-party communication complexity?
\item Can techniques for proving upper bounds on communication complexity be adapted to prove upper bounds on the size of automata?
\end{itemize}

\subsection{Definitions of automata} \label{sec:prel}

An \emph{NFA} is a quintuple $\A = (Q,\Sigma,\delta,I,F)$, where $Q$ is the finite set of states, $\Sigma$ is the finite alphabet, $\delta \subseteq Q \times \Sigma \times Q$ is the transition relation, $I \subseteq Q$ is the set of initial states, and $F \subseteq Q$ is the set of accepting states.
We write $q \xrightarrow{a} r$ to denote that $(q,a,r) \in \delta$.
A finite sequence $q_0 \xrightarrow{a_1} q_1 \xrightarrow{a_2} \cdots \xrightarrow{a_n} q_n$ is called a \emph{run}; it can be summarized as $q_0 \xrightarrow{a_1 \cdots a_n} q_n$.
The NFA~$\A$ \emph{recognizes} the language $L(\A) := \{w \in \Sigma^* \mid \exists\, q_0 \in I \,.\, \exists\, f \in F \,.\, q_0 \xrightarrow{w} f\}$.
The NFA~$\A$ is a \emph{DFA} if $|I| = 1$ and for every $q \in Q$ and $a \in \Sigma$ there is exactly one $q'$ with $q \xrightarrow{a} q'$.
The NFA~$\A$ is a \emph{UFA} if for every word $w = a_1 \cdots a_n \in \Sigma^*$ there is at most one \emph{accepting} run for~$w$, i.e., a run $q_0 \xrightarrow{a_1}  q_1 \xrightarrow{a_2} \cdots \xrightarrow{a_n} q_n$ with $q_0 \in I$ and $q_n \in F$.
Any DFA is a UFA.


\section{UFA Complementation} \label{sec:complement}


In this section we prove \cref{thm:complement}.
\complement*
The proof uses concepts from communication complexity, in particular a recent result from~\cite{Balodis2021FOCS} and a nondeterministic lifting theorem from~\cite{GLMWZ16}. We start by recalling these tools.

\subsection{DNFs and nondeterministic protocols} \label{sub:com-com}
\subparagraph{Unambiguous DNFs.}
Let $D = C_1 \lor \cdots \lor C_m$ be an $n$-variate boolean formula in disjunctive normal form (DNF).
DNF~$D$ has \emph{width}~$k$ if every $C_i$ is a conjunction of at most $k$ literals.
We call such~$D$ a \emph{$k$-DNF}.
For conjunctive normal form (CNF) formulas the width and $k$-CNFs are defined analogously.
DNF~$D$ is said to be \emph{unambiguous} if for every input $x \in \{0,1\}^n$ at most one of the conjunctions~$C_i$ evaluates to true, $C_i( x ) = 1$.
For any boolean function $f\colon \{0,1\}^n \to \{0,1\}$ define 
\begin{itemize}
\item $\C_1(f)$ as the least $k$ such that $f$ can be written as a $k$-DNF;
\item $\C_0(f)$ as the least $k$ such that $f$ can be written as a $k$-CNF;
\item $\UC_1(f)$ as the least $k$ such that $f$ can be written as an unambiguous $k$-DNF.
\end{itemize}
Note that $\C_0(f) = \C_1(\neg f)$.
The following recent result separates two of these measures.
\begin{theorem}[{\cite[Theorem~1]{Balodis2021FOCS}}] \label{thm:Puzzle-I}
For every~$k\in\bN$ there exists a function $f\colon\{0,1\}^n \to \{0,1\}$ where $n\leq \poly(k)$ and such that $\UC_1(f) \leq k$ and $\C_0(f) \geq \tOmega(k^2)$.\qed
\end{theorem}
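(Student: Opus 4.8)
The plan is to route the separation through a graph‑colouring statement of Alon--Saks--Seymour type. To a graph $G=(V,E)$ together with a partition of its edge set into bicliques $L_1\times R_1,\ldots,L_m\times R_m$, I would attach a \emph{total} Boolean function $f_G$ on $\poly(|V|)$ variables --- morally ``does my input encode an edge of $G$?'' --- so that the two measures of interest are controlled by two parameters of $G$. On the $1$‑side the biclique partition should turn into an unambiguous DNF almost for free: each biclique contributes one conjunction, the conjunctions are pairwise mutually exclusive precisely because the $L_i\times R_i$ \emph{partition} $E$ (this is where unambiguity comes from, and is exactly what the word ``partition'' in ``biclique partition number'' is doing), and each conjunction stays narrow once one has arranged, by composing with a small inner gadget of the sort used in query‑to‑communication lifting, that the assertion ``my input really names a vertex of $G$'' is itself certifiable within the width budget. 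Thus $\UC_1(f_G)$ is, up to polylogarithmic factors, the biclique partition number of $G$, which we will make $\poly(k)$ (few, large bicliques). On the $0$‑side one uses $\C_0(f_G)=\C_1(\neg f_G)$ to recast a narrow CNF for $f_G$ as a subcube cover of the non‑edges of $G$, and a colouring‑style argument then forces $\C_0(f_G)$ to be as large as a suitable chromatic‑type parameter of $G$.

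Given this dictionary, \cref{thm:Puzzle-I} reduces to producing a graph with biclique partition number $\poly(k)$ whose chromatic‑type parameter is large enough to make the induced lower bound $\C_0(f_G)\geq\tOmega(k^2)$ --- that is, a near‑quadratic gap between biclique partition number and (a variant of) chromatic number. This is the quantitative content of the disproof of the Alon--Saks--Seymour conjecture (Huang--Sudakov) and its strengthenings; I would take such a graph from that literature, or build one for the purpose. If an off‑the‑shelf construction only yields a gap of the form $\mathrm{bp}^{1+c}$ for a fixed $c<1$, I would amplify it towards the quadratic ceiling --- which is a genuine ceiling, since a Yannakakis‑type argument gives $\C_0(f)\leq O(\UC_1(f)^2)$ --- by composing the corresponding Boolean functions (equivalently, a tensor/substitution operation on the graphs) and tracking how both measures behave under the composition. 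A final rescaling of $k$ absorbs the $\poly$ slack in the dictionary, and a routine count gives $n=\poly(k)$.

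The main obstacle is to carry this out for a \emph{total} function. For \emph{partial} Boolean functions (with ``don't care'' inputs) a quadratic $\UC_1$‑versus‑$\C_0$ separation is comparatively easy and already known \cite{GLMWZ16}; the difficulty here --- the same one that makes \cref{thm:error} harder than its partial analogue --- is that $f_G$ must be defined on \emph{every} input, including the many that encode no honest edge of $G$, without a cheap CNF sneaking in through those ``garbage'' inputs, while the $0$‑side colouring lower bound still survives. Designing the inner gadget so that validity of an encoding is cheap to certify on the $1$‑side yet useless to a would‑be narrow CNF on the $0$‑side, and pushing the underlying graph gap to be polylog‑tight quadratic rather than merely polynomial, is where the real work sits; the subcube‑cover $\Leftrightarrow$ biclique‑partition bookkeeping and the variable count are routine.
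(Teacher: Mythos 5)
The paper does not prove this theorem: it is imported verbatim from Balodis et al.\ \cite{Balodis2021FOCS} with a terminal \verb|\qed| and no proof environment, so there is no in-paper argument to compare your sketch against. Assessed on its own merits, your sketch has the right meta-picture --- there \emph{is} a dictionary between unambiguous DNFs and biclique-partitioned graphs, and the theorem is indeed the query/DNF analogue of a near-quadratic Alon--Saks--Seymour gap --- but the concrete plan runs in the wrong direction and its fallback does not work.

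First, direction. You propose to \emph{start} from a graph with a near-quadratic gap between biclique partition number and a chromatic-type parameter, lifted from ``the disproof of Alon--Saks--Seymour (Huang--Sudakov) and its strengthenings,'' and then encode it as a Boolean function. But no such graph was available: the Huang--Sudakov construction and its successors gave gaps of the form $\chi \ge \mathrm{bp}^{1+c}$ for fixed small $c$, far short of the exponent $2$ you need. In fact, Balodis et al.\ go the opposite way --- they construct the Boolean function $f$ directly (a bespoke combinatorial construction, not a graph encoding) and deduce an \emph{improved} Alon--Saks--Seymour lower bound as a corollary. So the graph you want to import is an \emph{output} of their paper, not an input you can take off the shelf.

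Second, the amplification you invoke to repair this does not do what you want. If $\UC_1(f) \le k$ and $\C_0(f) \ge k^{1+c}$, then iterated composition scales \emph{both} measures multiplicatively, so $\UC_1(f^{\circ t})$ grows like $k^t$ while $\C_0(f^{\circ t})$ grows like $k^{(1+c)t}$ --- the \emph{ratio of exponents} stays $1+c$. Composition (and the corresponding tensor/substitution on graphs) preserves a polynomial gap; it does not push $1+c$ up toward $2$. There is no known generic booster here; Balodis et al.\ needed a fresh construction precisely because the existing gap could not be amplified.

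Two smaller issues. The ``inner gadget of the sort used in query-to-communication lifting'' that you want to insert is a category mixup: \cref{thm:Puzzle-I} lives purely in the \emph{query} world (DNF/CNF width of a total Boolean function), where lifting gadgets play no role; gadgets only enter later in this paper when transferring these query bounds to protocols. And the graph-to-function dictionary is not as free as you assert: an $m$-piece biclique partition maps naturally to an $m$-\emph{term} DNF, not a \emph{width}-$m$ DNF, so controlling $\UC_1$ (a width measure) requires extra structure on the encoding, not just on the number of bicliques. You do correctly flag the real obstacle --- defining a \emph{total} function so that garbage encodings do not let a cheap CNF through --- but that is where the entire difficulty lives, and your sketch leaves it open. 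So as a proof of this statement, the sketch is not salvageable by the route you describe; the correct route (per the cited source) is to build $f$ directly in the query world.
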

In words, for every~$k$ there is an unambiguous $k$-DNF such that any equivalent CNF requires width~$\tOmega(k^2)$.
The bound is almost tight, as every unambiguous $k$-DNF has an equivalent $k^2$-CNF; see~\cite[Section~3]{Goos15}.

\subparagraph{Nondeterministic protocols and rectangle covers.}
Next we recall standard notions from two-party communication complexity; see \cite{KushilevitzNisan,Rao2020} for textbooks.
Consider a two-party function $F\colon X \times Y \to \{0,1\}$.
A set $A \times B \subseteq X \times Y$ (with $A \subseteq X$ and $B \subseteq Y$) is called a \emph{rectangle}.
Rectangles $R_1, \ldots, R_k$ \emph{cover} a set $S \subseteq X \times Y$ if $\bigcup_i R_i = S$.
For~$b \in \{0,1\}$, the \emph{cover number} $\Cov_b(F)$ is the least number of rectangles that cover $F^{-1}(b)$.
The \emph{nondeterministic (resp., co-nondeterministic) communication complexity} of~$F$ is defined as $\Non_1(F) \coloneqq \log_2 \Cov_1(F)$ (resp., $\Non_0(F) := \log_2 \Cov_0(F)$).
Note that $\Non_0(F) = \Non_1(\neg F)$.
The nondeterministic communication complexity can be interpreted as the number of bits that two parties (Alice and Bob), holding inputs $x \in X$ and $y \in Y$, respectively, need to communicate in a nondeterministic (i.e., based on guessing and checking) protocol in order to establish that~$F(x,y) = 1$; see~\cite[Chapter~2]{KushilevitzNisan} for details.

\subparagraph{Nondeterministic lifting.}
Next we formulate a \emph{lifting theorem}, which allows us to transfer lower bounds on the DNF width of an $n$-bit boolean function $f$ to the nondeterministic communication complexity of a related two-party function $F$. We first choose a small two-party function $g\colon \{0,1\}^b \times \{0,1\}^b \to \{0,1\}$, often called a \emph{gadget}. Then we compose $f$ with~$g$ to construct the function $F\coloneqq f\circ g^n$ that maps $\{0,1\}^{b n} \times \{0,1\}^{b n} \to \{0,1\}$ where Alice gets as input $x\in \{0,1\}^{b n}$, Bob gets as input $y\in \{0,1\}^{b n}$, and their goal is to compute
\[
F(x,y) ~\coloneqq~ f(g(x_1,y_1), \ldots, g(x_n,y_n)) \qquad \text{where } x_i,y_j \in \{0,1\}^b.
\]
The following is a nondeterministic lifting theorem~\cite{GLMWZ16,Goos15}.
\begin{theorem}[{\cite[Theorem~4]{Goos15}}] \label{thm:lifting}
For any $n \in \N$ there is a gadget $g\colon \{0,1\}^b \times \{0,1\}^b \to \{0,1\}$ with $b = \Theta(\log n)$ such that for any function $f\colon \{0,1\}^n \to \{0,1\}$ we have, for $F\coloneqq f\circ g^n$,
\[
\Non_0(F) ~=~ \Omega(\C_0(f) \cdot b)
\]
(and thus also $\Non_1(F) = \Omega(\C_1(f) \cdot b)$).\qed
\end{theorem}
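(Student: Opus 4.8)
Only the first displayed identity, $\Non_0(F)=\Omega(\C_0(f)\cdot b)$, requires work; the parenthetical $\Non_1(F)=\Omega(\C_1(f)\cdot b)$ is the same statement applied to $\neg f$ and $\neg F$ (recall $\C_0(f)=\C_1(\neg f)$ and $\Non_0(F)=\Non_1(\neg F)$). I would first pass to the geometric description of CNF width: $\C_0(f)$ is exactly the least $c$ such that $f^{-1}(0)$ can be written as a union of subcubes, each of co-dimension at most $c$ and each contained in $f^{-1}(0)$ --- a width-$w$ clause is falsified precisely on a co-dimension-$w$ subcube on which $f$ vanishes, and conversely such a subcube cover yields a width-$c$ CNF. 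The theorem is thus equivalent to: \emph{if $F^{-1}(0)$ is covered by $t$ combinatorial rectangles, then $f^{-1}(0)$ is a union of subcubes, each contained in $f^{-1}(0)$, of co-dimension $O\!\big((\log t)/b\big)$.} Applying this with $t=2^{\Non_0(F)}$ gives $\C_0(f)=O(\Non_0(F)/b)$, as desired.

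Before that, it is worth recording the matching upper bound $\Non_0(F)=O(\C_0(f)\cdot b)$, which shows the theorem is tight: from a width-$k$ CNF $f=\bigwedge_j D_j$ one builds a co-nondeterministic protocol for $F$ --- on an input $(x,y)$ with $F(x,y)=0$, guess a clause $D_j$ violated by the gadget outputs, then (the clause already names the $\le k$ coordinates $i$ it reads) have Alice and Bob reveal $x_i$ and $y_i$ for those coordinates and verify. Since a width-$k$ CNF may be taken to have at most $(2n)^k=2^{O(kb)}$ distinct clauses, this is an $O(kb)$-bit protocol, so $\Cov_0(F)\le 2^{O(kb)}$.

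For the lower bound, fix a cover of $F^{-1}(0)$ by rectangles $R_1,\dots,R_t$ with $R_j=A_j\times B_j$, each $F$-monochromatically $0$; we may assume $\log t\le bn$, since $\Non_0(F)\le bn$ always and also $\C_0(f)\le n$, so otherwise there is nothing to prove. For a rectangle $R=A\times B$ with $A,B\subseteq(\{0,1\}^b)^n$, write $A_i=\{x_i:x\in A\}$ and $B_i=\{y_i:y\in B\}$ for the block projections; since $g$ maps into $\{0,1\}$, each $g(A_i\times B_i)$ is either all of $\{0,1\}$ (``block $i$ is free'') or a singleton $\{v_i\}$ (``block $i$ is fixed''). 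Let $I(R)$ be the set of fixed blocks, $\rho(R)\in\{0,1\}^{I(R)}$ the tuple of forced values, and $\sigma(R)\subseteq\{0,1\}^n$ the subcube pinning the blocks of $I(R)$ to $\rho(R)$. Since $A\times B\subseteq\prod_iA_i\times\prod_iB_i$, one gets for free that $g^n(A\times B)\subseteq\sigma(R)$; hence, using that $R$ is $0$-monochromatic, $g^n(A\times B)\subseteq f^{-1}(0)$, and taking the union over $j$ gives $\bigcup_j\sigma(R_j)\supseteq g^n\!\big(\bigcup_jR_j\big)=g^n(F^{-1}(0))=f^{-1}(0)$. The co-dimension bound $|I(R_j)|=O((\log t)/b)$ is pure counting: a fixed block means $A_i\times B_i$ is a monochromatic rectangle of the gadget, which by the gadget's small discrepancy ($2^{-\Omega(b)}$, e.g.\ for inner product) forces $|A_i|\cdot|B_i|\le 2^{2b}\cdot2^{-\Omega(b)}$; summing $\log\!\big(2^{2b}/(|A_i||B_i|)\big)\ge\Omega(b)$ over $i\in I(R_j)$, the total is at most $2bn-\log(|A_j||B_j|)\le\log t$.

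The remaining point --- and the one I expect to be the main obstacle --- is to upgrade the inclusion $g^n(R_j)\subseteq\sigma(R_j)$ to an \emph{equality}: only if $g^n$ maps each $0$-monochromatic rectangle \emph{onto} the full subcube $\sigma(R_j)$ is $\sigma(R_j)$ itself contained in $f^{-1}(0)$, which is what makes $\sigma(R_1),\dots,\sigma(R_t)$ a legitimate subcube cover of $f^{-1}(0)$. This is exactly where the rectangle/subcube mismatch bites: a rectangle is a product of two \emph{sets of $n$-tuples}, and those sets need not be product sets across the $n$ blocks, so choosing a good preimage block-by-block need not land inside $A_j\times B_j$. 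The plan is to defeat this by taking $g$ to be a good block-wise two-source extractor on $b=\Theta(\log n)$ bits (inner product works) and running a ``rectangle cleanup'': in a cover of size $t$ both $A_j$ and $B_j$ have min-entropy at least $\big(1-\tfrac{\log t}{bn}\big)bn$, and after conditioning so that $A_j,B_j$ become block sources, at most $O((\log t)/b)$ blocks carry too little conditional entropy --- these are precisely the blocks one puts into $I(R_j)$ --- while on each remaining block the gadget extracts one nearly unbiased bit; a hybrid/XOR-lemma estimate then shows the distribution of $g^n$ over the conditioned rectangle is within $o(2^{-n})$ of uniform on the free-block subcube, hence has full support. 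Driving this error below $2^{-n}$ is what forces $b=\Theta(\log n)$: it makes the per-block extractor surplus $\omega(1)$, which is just enough. Once this surjectivity is established, $\sigma(R_1),\dots,\sigma(R_t)$ cover $f^{-1}(0)$ by co-dimension-$O((\log t)/b)$ subcubes contained in $f^{-1}(0)$, so $\C_0(f)=O((\log t)/b)=O(\Non_0(F)/b)$, and the theorem follows.
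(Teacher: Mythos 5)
The paper itself contains no proof of this statement: the trailing $\qed$ and the bracketed citation mark it as imported from~\cite{Goos15} (Theorem~4 there, building on~\cite{GLMWZ16}), so there is no in-paper argument to compare against. Your high-level translation --- CNF width equals a subcube cover of $f^{-1}(0)$; each rectangle $R_j$ in a cover of $F^{-1}(0)$ defines ``fixed blocks'' $I(R_j)$ and a subcube $\sigma(R_j)$ with $g^n(R_j)\subseteq\sigma(R_j)$ for free --- has the right shape, and you correctly flag surjectivity as the crux. But two steps have genuine gaps.

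The codimension bound is wrong as stated: you conclude $|I(R_j)|\cdot\Omega(b)\le 2bn-\log(|A_j||B_j|)\le\log t$, but the last inequality needs $|A_j||B_j|\ge 2^{2bn}/t$, which holds for at least one rectangle in a cover of the \emph{whole domain} but not for every rectangle in a cover of $F^{-1}(0)$: individual rectangles can be arbitrarily small, pushing $|I(R_j)|$ up to $\Theta(n)$. And the surjectivity step is exactly where the technical weight of~\cite{GLMWZ16,Goos15} lies, and the extractor sketch does not carry it: (a) the per-bit two-source extractor property of inner product does not, in one hybrid step, give $2^{-n}$-closeness of the full $n$-bit output over a conditioned rectangle; and, more to the point, (b) even granting full support for a conditioned subrectangle $R_j'\subseteq R_j$, the certified subcube is $\sigma(R_j')\subsetneq\sigma(R_j)$ (conditioning only fixes \emph{more} blocks), so $\{\sigma(R_j')\}_j$ need no longer cover $g^n\big(\bigcup_j R_j\big)=f^{-1}(0)$, and the ``one subcube per rectangle'' accounting collapses. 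The actual proof handles both issues at once by partitioning each $R_j$ into many structured pieces (the density-restoring partition) and emitting one clause per piece, with each clause's width controlled by how many blocks were fixed while structuring that piece --- not by the global deficiency of $R_j$. Your outline would need that extra machinery to become a proof.
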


\subparagraph{Protocols can simulate automata.}
Finally, we need a simple folklore connection between automata and protocols. To formalise this, we tacitly identify a function $F\colon \{0,1\}^{m_1} \times \{0,1\}^{m_2} \to \{0,1\}$ with the language $F^{-1}(1) ~=~ \{x y \in \{0,1\}^{m_1+m_2} \mid F(x,y) = 1\}$.
\begin{lemma} \label{lem:NFA-CC}
If a two-party function $F\colon \{0,1\}^{m_1} \times \{0,1\}^{m_2} \to \{0,1\}$ admits an NFA with $s$~states, then $\Cov_1(F) \le s$ (that is, $\Non_1(F)\leq\log s$).
\end{lemma}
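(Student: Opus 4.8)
The plan is to read a rectangle cover of $F^{-1}(1)$ straight off the state set of the NFA. Write $\A=(Q,\{0,1\},\delta,I,F_\A)$ for an NFA with $|Q|=s$ states recognising the language $F^{-1}(1)=\{xy:F(x,y)=1\}\subseteq\{0,1\}^{m_1+m_2}$, where $F_\A\subseteq Q$ denotes the accepting states (renamed to avoid a clash with the two-party function~$F$). The point is that every input to $F$ is a word of the \emph{fixed} length $m_1+m_2$ that splits at the \emph{fixed} position $m_1$ into Alice's part $x$ and Bob's part $y$; hence any accepting run of $\A$ on $xy$ occupies some well-defined state right after it has consumed the first $m_1$ symbols, and this ``midpoint state'' is exactly what will index the cover.

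Concretely, for each $q\in Q$ I would set
\[
A_q \coloneqq \{x\in\{0,1\}^{m_1} : q_0\xrightarrow{x}q \text{ for some } q_0\in I\}, \qquad
B_q \coloneqq \{y\in\{0,1\}^{m_2} : q\xrightarrow{y}f \text{ for some } f\in F_\A\},
\]
and $R_q\coloneqq A_q\times B_q$, a rectangle in $\{0,1\}^{m_1}\times\{0,1\}^{m_2}$. I then claim $F^{-1}(1)=\bigcup_{q\in Q}R_q$ under the identification of $xy$ with $(x,y)$. For the inclusion ``$\subseteq$'': if $F(x,y)=1$ then $xy\in L(\A)$, so there is an accepting run $q_0\xrightarrow{xy}f$ with $q_0\in I$ and $f\in F_\A$; letting $q$ be the state reached after the prefix $x$, this run decomposes as $q_0\xrightarrow{x}q\xrightarrow{y}f$, whence $x\in A_q$, $y\in B_q$, and $(x,y)\in R_q$. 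For the inclusion ``$\supseteq$'': if $(x,y)\in R_q$ then some $q_0\in I$ has $q_0\xrightarrow{x}q$ and some $f\in F_\A$ has $q\xrightarrow{y}f$; concatenating these runs yields an accepting run of $\A$ on $xy$, so $F(x,y)=1$.

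Since the $s$ rectangles $\{R_q\}_{q\in Q}$ cover $F^{-1}(1)$, we obtain $\Cov_1(F)\le s$, and therefore $\Non_1(F)=\log_2\Cov_1(F)\le\log_2 s$. I do not expect any genuine obstacle here — this is routine folklore — and the only thing worth being careful about is the bookkeeping that the word's split point coincides with the Alice/Bob boundary, which is precisely why the covering sets come out as rectangles rather than as arbitrary subsets of $\{0,1\}^{m_1}\times\{0,1\}^{m_2}$.
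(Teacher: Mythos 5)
Your proposal is correct and is essentially the paper's own proof: you decompose each accepting run at the fixed split position $m_1$ and cover $F^{-1}(1)$ by the $s$ rectangles $A_q\times B_q$ indexed by the midpoint state $q$. Your explicit verification of both inclusions (and renaming the accepting-state set to avoid the clash with the two-party function $F$) only makes the folklore argument slightly more careful than the paper's one-line display.
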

\begin{proof}
Let $\A = (Q,\Sigma,\delta,I,F)$ be an NFA with $L(\A) = \{x y \in \{0,1\}^{m_1+m_2} \mid F(x,y) = 1\}$.
We show that $F^{-1}(1)$ is covered by at most $|Q|$ rectangles.
Indeed, $F^{-1}(1)$ equals
\[
 \bigcup_{q \in Q} (\{x \in \{0,1\}^{m_1} \mid \exists\, q_0 \in I \,.\, q_0 \xrightarrow{x} q\}) \times 
                   (\{y \in \{0,1\}^{m_2} \mid \exists\, f \in F \,.\, q \xrightarrow{y} f\}) \,.
\]
(Alternatively, in terms of a nondeterministic protocol, the first party, holding $x \in \{0,1\}^{m_1}$, produces a run for~$x$ from an initial state to a state~$q$ and then sends the name of~$q$, which takes $\log_2 |Q|$ bits, to the other party.
The other party then produces a run for~$y$ from $q$ to an accepting state.)
\end{proof}

\subsection{Proof of \texorpdfstring{\cref{thm:complement}}{\cref{thm:complement}}} \label{sec:proof-1}

For $k \in \N$, let $f\colon\{0,1\}^n \to \{0,1\}$ be the function from \cref{thm:Puzzle-I}. That is, $f$ has an unambiguous $k$-DNF with $k = n^{\Omega(1)}$ (hence, $\log n = O(\log k)$) and $\C_0(f) = \tOmega(k^2)$.
Let $g\colon\{0,1\}^b \times \{0,1\}^b \to \{0,1\}$ with $b = \Theta(\log n)$ and $F\coloneqq f\circ g^n\colon\{0,1\}^{b n} \times \{0,1\}^{b n} \to \{0,1\}$ be the two-party functions from the lifting theorem \cref{thm:lifting}.
We will show that \cref{thm:complement} holds for the language $F^{-1}(1)$.

First we argue that $F$ has an unambiguous DNF of small width.
Indeed, $g$ and $\neg g$ have unambiguous $2 b$-DNFs, which can be extracted from the deterministic decision tree of~$g$.
By plugging these unambiguous $2 b$-DNFs for $g$ and~$\neg g$ into the unambiguous $k$-DNF for~$f$ (and ``multiplying out''), one obtains an unambiguous $2 b k$-DNF, say~$D$, for~$F$.

Over the $2 b n$ variables of~$F$, there exist at most $(2(2 b n) + 1)^{2 b k}$ different conjunctions of at most $2 b k$ literals.
So $D$ consists of at most $n^{O(b k)}$ conjunctions.
From~$D$ we obtain a UFA~$\A$ that recognizes $F^{-1}(1) \subseteq \{0,1\}^{2 b n}$, as follows.
Each initial state of~$\A$ corresponds to a conjunction in~$D$.
When reading the input $x \in \{0,1\}^{2 b n}$, the UFA checks that the corresponding assignment to the variables satisfies the conjunction represented by the initial state.
This check requires at most $O(b n)$ states for each initial state.
Thus, $\A$ has at most $n^{O(b k)} = 2^{\tO(k)} \eqqcolon N$ states in total. (We use $N$ in place of $n$ in the statement of \cref{thm:complement}.)

On the other hand, by \cref{thm:lifting}, we have $\Non_0(F) = \Omega(\C_0(f) \cdot b) = \tOmega(k^2)$.
So by \cref{lem:NFA-CC} any NFA that recognizes $F^{-1}(0)$ has at least $2^{\tOmega(k^2)}$ states.
Any NFA that recognizes $\{0,1\}^* \setminus L(\A)$ can be transformed into an NFA that recognizes $F^{-1}(0) = \{0,1\}^{2 b n} \setminus L(\A)$ by taking a product with a DFA that has $2 b n + 2$ states.
It follows that any NFA that recognizes $\{0,1\}^* \setminus L(\A)$ has at least $2^{\tOmega(k^2)} / (2 b n + 2) = 2^{\tOmega(k^2)}  = N^{\tOmega(\log N)}$ states.
\qed

\section{UFA Union} \label{sec:union}

In this section, we prove \Cref{thm:union}.
\union*

We follow the same high-level approach that we already saw in \cref{sec:complement}. Namely, we will first show that computing the $\lor$-operation is hard for unambiguous DNFs and then lift that hardness to unambiguous protocols, which then implies the same hardness for~UFAs. There are, however, two challenges in carrying out this plan.

\begin{enumerate}[itemsep=1em]
\item It is an open problem to prove an unambiguous lifting theorem. That is, it is not known whether the unambiguous communication complexity of $f\circ g^n$ is at least~$\Omega(\UC_1(f))$. To circumvent this issue, we study instead a \emph{linear relaxation} of unambiguous DNFs. These objects are called \emph{conical juntas} and they do admit a lifting theorem~\cite{GLMWZ16,Kothari2021}.
\item There is no existing result showing that the $\lor$-operation is hard for unambiguous DNFs and/or conical juntas. We show a result of this type. The proof is by a reduction to the hardness of negating conical juntas, which \emph{is} a known result~\cite{GJPW18}.
\end{enumerate}

\subsection{Conical juntas}

A nonnegative function $h\colon \bool^n\to \bR_{\geq 0}$ is a \emph{$d$-junta} if $h$ depends on at most $d$ variables. For example, a conjunction of $d$ literals is a $d$-junta. Moreover, we say $f\colon\bool^n\to \bR_{\geq 0}$ a \emph{conical $d$-junta} if it can be written as a nonnegative linear combination of $d$-juntas. Equivalently,~$f$ is a conical $d$-junta if it can be written as $f=\sum_i w_i C_i$ where each $C_i$ is a width-$d$ conjunction and $w_i\in\bR_{\geq 0}$ are nonnegative coefficients. For example, if $f$ can be written as an unambiguous $d$-DNF, $f= C_1\lor\cdots\lor C_m$, then $f=\sum_i C_i$ is a conical $d$-junta with $0/1$ coefficients. The \emph{nonnegative degree} of $f$, denoted $\deg^+(f)$, is the least $d$ such that~$f$ is a conical $d$-junta. In particular, if $f$ is boolean-valued, then $\deg^+(f)\leq \UC_1(f)$.

We also need to work with \emph{approximate} conical juntas that compute a given function only to within some point-wise error $\epsilon>0$. This is important because the available lifting theorems for $\deg^+$ incur some error, and hence we need to prove lower bounds that are robust to this error. Indeed, we define the \emph{$\epsilon$-approximate nonnegative degree} of $f$, denoted $\deg^+_\epsilon(f)$, as the least nonnegative degree of a conical junta $g$ such that
\[
|f(x)-g(x)| ~\le~ \epsilon\qquad \text{ for all } x\in \bool^n.
\]

\begin{remark*}
An awkward aspect of working with approximate conical juntas is that the error parameter $\epsilon$ is not well behaved. For $0<\epsilon<\delta<1/2$ we of course have $\deg^+_\delta(f)\leq\deg^+_\epsilon(f)$ but it is not a priori clear whether the converse inequality holds with a modest loss in the degree. In fact, in \cref{sec:app}, we will end up showing that there can be a polynomial gap between the nonnegative degrees corresponding to two different error parameters---and this is related to our bonus result discussed in the introduction. As a consequence, our theorems in this section have to track the error parameters with some care.
\end{remark*}

\subparagraph{Linear programming formulation.}
Approximate nonnegative degree can be captured using an LP. Write $\cC_d^n$ for the set of all conjunctions of width at most $d$ over $n$ variables. In the \cref{junta_primal} programme below, we have a variable $w_C\in\bR$ for every $C\in\cC_d^n$. In the associated \cref{junta_dual} programme, we have a variable $\Phi(x)\in\bR$ for each $x\in\bool^n$.
\def\arraystretch{1.3}
\begin{mdframed}
\begin{equation} \label{junta_primal}
\begin{array}{rll}
\min & \epsilon\\
\textsl{subject to}
& \sum_C |w_CC(x)-f(x)| ~\le~ \epsilon,\quad\mbox{} & \forall x\in\bool^n \\
& w_C ~\geq~ 0, & \forall C\in\cC_d^n
\end{array}
\tag{Primal}
\end{equation}
{\color{gray}\noindent\rule{\textwidth}{0.5pt}}%
\begin{equation} \label{junta_dual}
\begin{array}{rll}
\max & \langle \Phi,f\rangle ~\coloneqq~ \sum_x \Phi(x)f(x)\\
\textsl{subject to}
& \|\Phi\| ~\coloneqq~ \sum_x|\Phi(x)| ~\le~ 1 \qquad\mbox{} \\
& \langle \Phi,C\rangle ~\leq~ 0, & \forall C\in\cC_d^n
\end{array}
\tag{Dual}
\end{equation}
\end{mdframed}

We have that $\deg^+_\delta(f)\leq d$ iff the optimal value of \cref{junta_primal} is at most $\delta$. Alternatively, by strong LP duality, we have $\deg^+_\delta(f)> d$ iff there exists a feasible solution $\Phi$ to \cref{junta_dual} such that $\langle \Phi,f\rangle  > \delta$. It is typical to think of such feasible $\Phi\colon\{0,1\}^n\to\bR$ as a \emph{dual certificate} that witnesses a lower bound on approximate nonnegative degree.


\subsection{Hardness of \texorpdfstring{\boldmath $\lor$}{OR}}

The goal of this subsection is to prove \cref{thm:hard-or} below, which states that the $\lor$-operation is hard for unambiguous DNFs and even approximate conical juntas. Given an $n$-bit boolean function $f$ we define a $2n$-bit function by $\ORf(xy) \coloneqq f(x)\lor f(y)$ where $x,y\in\bool^n$. 
\begin{theorem}[\boldmath Hardness of $\lor$] \label{thm:hard-or}
For every $m\in\bN$, there exists a boolean function $f\colon\bool^n\to\bool$ with $n\leq\poly(m)$ such that $\UC_1(f)\leq m$ and $\deg^+_{1.5\times10^{-5}}(f^\lor)\geq\tOmega(m^2)$.
\end{theorem}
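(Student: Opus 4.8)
The plan is to take for $f$ a function whose \emph{complement} is hard for approximate conical juntas, and then to argue that any small conical junta for $\ORf$ would covertly contain a small conical junta for $\overline f$; this is the promised converse to~\cref{eq:rewrite}. Concretely: invoke the query lower bound for complementation of~\cite{GJPW18}, which gives, for every $m$, a boolean function $h\colon\bool^{n_0}\to\bool$ with $n_0\le\poly(m)$ that has an unambiguous $m$-DNF --- so $\UC_1(h)\le m$, hence $\deg^+(h)\le m$ --- and with $\deg^+_{\epsilon_0}(\overline h)\ge\tOmega(m^2)$ for an absolute constant $\epsilon_0>0$ (the robust form of the lower bound; if only the exact statement $\deg^+(\overline h)\ge\tOmega(m^2)$ were available one would first amplify). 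Set $f\coloneqq h$; then $\UC_1(f)\le m$ and $n\coloneqq n_0\le\poly(m)$ hold as required, and the remaining task is to show $\deg^+_{1.5\times10^{-5}}(\ORf)\ge\tOmega(m^2)$.

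The heart of the argument is the inequality $\deg^+_{\epsilon_1}(\ORf)\ge\deg^+_{\epsilon_0}(\overline f)$ for a suitable constant $\epsilon_1>0$, which I would prove by exhibiting a dual certificate for the \cref{junta_dual} programme. Put $D\coloneqq\deg^+_{\epsilon_0}(\overline f)-1=\tOmega(m^2)$, and let $\theta\colon\bool^n\to\bR$ be a dual certificate for $\overline f$ at degree $D$: thus $\|\theta\|\le1$, $\langle\theta,C\rangle\le0$ for every conjunction $C$ of width at most $D$, and $\langle\theta,\overline f\rangle>\epsilon_0$; we will additionally want $\theta$ to be balanced, $\sum_x\theta(x)=0$. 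Define $\Phi\coloneqq-\theta\otimes\theta$ on $\bool^n\times\bool^n$, i.e.\ $\Phi(x,y)\coloneqq-\theta(x)\theta(y)$. Then $\|\Phi\|=\|\theta\|^2\le1$; and any conjunction $C$ over the $2n$ variables with $\mathrm{width}(C)\le D$ factors as $C=A(x)\land B(y)$ with $\mathrm{width}(A)+\mathrm{width}(B)\le D$, so both $\langle\theta,A\rangle\le0$ and $\langle\theta,B\rangle\le0$ and hence $\langle\Phi,C\rangle=-\langle\theta,A\rangle\langle\theta,B\rangle\le0$; so $\Phi$ is feasible. For the objective value, the identity $1-\ORf(x,y)=\overline f(x)\,\overline f(y)$ gives
\[
\langle\Phi,\ORf\rangle ~=~ \langle\theta,\overline f\rangle^{2}-\left(\sum_{x}\theta(x)\right)^{2} ~=~ \langle\theta,\overline f\rangle^{2} ~>~ \epsilon_0^{2}.
\]
By \cref{junta_dual} this yields $\deg^+_{\epsilon_0^2}(\ORf)>D=\tOmega(m^2)$; since $\deg^+_\epsilon$ is nonincreasing in $\epsilon$, the bound at error $1.5\times10^{-5}$ follows as long as $1.5\times10^{-5}\le\epsilon_1$, which the explicit constant coming from~\cite{GJPW18} should deliver.

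I expect the substance to lie entirely in the second step, where two points need care. First, the objective computation relies on $\theta$ being (essentially) balanced: if $\left|\sum_x\theta(x)\right|$ is comparable to $\langle\theta,\overline f\rangle$ then $\langle\Phi,\ORf\rangle$ can fail to be positive, so one must check that the dual certificate produced by the lifting-based argument of~\cite{GJPW18} can be taken in (near-)balanced form $\sum_x\theta(x)\approx0$ --- equivalently, that the empty-conjunction constraint of the \cref{junta_dual} programme can be made tight. Second --- and this is why the statement carries the awkward constant $1.5\times10^{-5}$ rather than a generic $\Omega(1)$ --- approximate nonnegative degree does \emph{not} admit free error reduction (cf.\ the remark above and \cref{thm:error}), so the error parameters must be propagated quantitatively through the reduction and through the precise form of~\cite{GJPW18}; it is this accounting, rather than anything deep, that fixes the final error constant.
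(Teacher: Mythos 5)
Your route is in essence the paper's: take the function of \cref{lem:gjpw} and lower-bound $\andeg{\cdot}(\ORf)$ by the negated tensor square of a dual certificate, exactly as in \cref{negp_to_or} (feasibility of $-\theta\otimes\theta$, the factorisation of width-$D$ conjunctions, and the objective computation all match). The genuine gap is the balancedness hypothesis, which you flag but do not discharge, and it is not a routine check on \cite{GJPW18}: a feasible $\theta$ for the \cref{junta_dual} programme only satisfies $\sum_x\theta(x)\le 0$ (the empty-conjunction constraint), and \cref{lem:gjpw} is a black-box degree lower bound, not a statement about the shape of its dual certificates. If, say, $\sum_x\theta(x)=-1/2$ while $\langle\theta,\overline f\rangle\approx 0.05$, your objective $\langle\theta,\overline f\rangle^{2}-\big(\sum_x\theta(x)\big)^{2}$ is negative and the certificate proves nothing. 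Moreover, a (near-)balanced certificate for $\overline f$ at value $\epsilon_0$ is essentially the same thing as a dual certificate for the shifted function $2-f=1+\neg f$, i.e.\ it amounts to the claim $\andeg{\approx\epsilon_0}(2-f)\ge\tOmega(m^2)$; relating the approximate degree of $1+\neg f$ to that of $\neg f$ is precisely the missing step. The paper supplies it in \cref{neg_to_negp} by a powering trick ($g\mapsto((g+\epsilon)/2)^{k}$), which costs a constant factor in degree and, crucially, shrinks the admissible error from $\delta=0.05$ to $\epsilon=\ln(1+\delta)/\lceil\log_{3/4}\delta\rceil\approx 4\times 10^{-3}$; squaring this in the tensor step is exactly what produces the constant $1.5\times10^{-5}$ in \cref{thm:hard-or}.

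Your fallback remark that ``one would first amplify'' runs into the same wall: error reduction for approximate conical juntas is exactly what the paper shows is not free (\cref{thm:error}, and \cref{cl:or} gives $\andeg{1/4}(\ORf)\le\deg^+(f)\le m$ for this very $f$), so any argument that silently keeps a generic constant error through the reduction must fail somewhere. Note also how the paper sidesteps balancedness inside the tensor computation itself: \cref{negp_to_or} starts from a certificate for $2-f$ and uses the constraint $\langle\Phi,1\rangle\le 0$ to deduce $\langle\Phi,-f\rangle\ge\langle\Phi,2-f\rangle$, so the imbalance can only help; with that substitution (and \cref{neg_to_negp} feeding it from \cref{lem:gjpw}), your argument becomes the paper's proof.
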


We show \cref{thm:hard-or} by combining two lemmas, \cref{lem:gjpw,neg_to_or}, below. The first lemma, proved in~\cite{GJPW18}, states that unambiguous DNFs are hard to negate, even by approximate conical juntas. The second lemma, which remains for us to prove, states that, for conical juntas, computing $f^\lor$ is at least as hard as computing the negation $\neg f$. Hence \cref{thm:hard-or} follows immediately by combining these lemmas.
\begin{lemma}[\boldmath Hardness of $\neg$~{\cite[Lemma 8]{GJPW18}}] \label{lem:gjpw}
For every $m\in\bN$, there exists a boolean function $f\colon\bool^n\to\bool$ with $n\leq\poly(m)$ such that $\UC_1(f)\leq m$ and $\deg^+_{0.05}(\neg f)\geq\tOmega(m^2)$. \qed
\end{lemma}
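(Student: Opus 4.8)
The plan is to follow the proof of \cite[Lemma~8]{GJPW18}, which is a robust, approximate-conical-junta strengthening of the G\"o\"os-style $\UC_1$ versus $\C_0$ separation \cite{Goos15} underlying \cref{thm:Puzzle-I}: here one has to tolerate point-wise error $0.05$ and to bound the nonnegative degree of $\neg f$ (a relaxation of, and in general harder than, bounding $\C_0(f)$).

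First I would take $f$ to be a \emph{certificate function}: fix a carefully designed family of width-$m$ conjunctions $C_1,\dots,C_t$ over $\bool^n$, with $n,t\le\poly(m)$, that are pairwise contradictory --- no $x\in\bool^n$ satisfies two of them --- and set $f\coloneqq C_1\lor\cdots\lor C_t$. Pairwise contradiction is exactly unambiguity of this DNF, so $\UC_1(f)\le m$ comes for free. The real work is the bound $\deg^+_{0.05}(\neg f)\ge\tOmega(m^2)$, which I would obtain by exhibiting a dual solution of the \cref{junta_dual} programme: a $\Phi\colon\bool^n\to\bR$ with $\|\Phi\|\le 1$, with $\langle\Phi,C\rangle\le 0$ for every $C\in\cC_d^n$ where $d=\tOmega(m^2)$, and with $\langle\Phi,\neg f\rangle>0.05$. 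The natural candidate is $\Phi=\tfrac12(\nu_1-\nu_0)$, where $\nu_1$ is a probability distribution supported on $f^{-1}(0)=(\neg f)^{-1}(1)$ and $\nu_0$ one supported on $f^{-1}(1)$; then $\|\Phi\|\le 1$ and $\langle\Phi,\neg f\rangle=\tfrac12>0.05$ hold automatically, so the whole task collapses to choosing the design and the pair $(\nu_0,\nu_1)$ so that $\langle\nu_1,C\rangle\le\langle\nu_0,C\rangle$ for every width-$d$ conjunction $C$ --- a corruption-type bound: no narrow conjunction may be more likely under the ``$\neg f$-accepting'' distribution than under the ``$\neg f$-rejecting'' one.

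The hard part is precisely this corruption bound. It pulls in two directions: the canonical certificates must be rigid enough to stay pairwise contradictory (so the DNF remains unambiguous), yet pseudorandom enough --- with expansion-type spread --- that fixing fewer than $\approx m^2$ coordinates reveals nothing about which $C_i$, if any, is satisfied; and pushing this blindness threshold up to the near-optimal $\tOmega(m^2)$, which matches the $m^2$-CNF upper bound for unambiguous $m$-DNFs, is the technical core of \cite{GJPW18}. I would carry it out via a probabilistic construction of the design together with a union bound over the $n^{O(d)}$ conjunctions of width at most $d$, verifying that each is fooled by $(\nu_0,\nu_1)$ with room to spare.
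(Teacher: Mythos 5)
First, a framing point: the paper does not prove \cref{lem:gjpw} at all. It is imported verbatim as a black box from \cite[Lemma~8]{GJPW18} (which is why the statement ends with a tombstone rather than a proof), so for the purposes of this paper ``invoke the result of \cite{GJPW18}'' is the entire intended argument, and your opening sentence is consistent with that.

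As a reconstruction of the argument behind \cite{GJPW18}, however, your sketch has a genuine gap: everything is delegated to the corruption bound, and for that you offer only ``a probabilistic construction of the design together with a union bound over the $n^{O(d)}$ conjunctions.'' That is not how the bound is obtained, and it is not plausible as stated. You specify no random model for a pairwise-contradictory family of width-$m$ conjunctions, and the natural generic ways of enforcing pairwise contradiction destroy the hardness you need: for instance, if each $C_i$ contains a selector part encoding the index $i$ (the easiest way to make the terms pairwise conflict), then $\neg f$ has $0$-certificates of width $O(\log t)$, so even the exact nonnegative degree of $\neg f$ is tiny. The whole theorem lives exactly in the tension you flag (``rigid enough for unambiguity, pseudorandom enough to be blind up to degree $\approx m^2$''), and in \cite{GJPW18} it is resolved not by a random design but by an explicit pointer/grid-based modification of the clique-vs-independent-set style function of \cite{Goos15}, with hard input distributions planted along that pointer structure and a dual-witness analysis tailored to it. A further, more repairable, issue is the witness shape $\Phi=\tfrac12(\nu_1-\nu_0)$: the \cref{junta_dual} constraint then demands $\langle\nu_1,C\rangle\le\langle\nu_0,C\rangle$ \emph{exactly} for every width-$d$ conjunction, which is needlessly rigid; one normally takes an unbalanced combination such as $\nu_1-(1+\delta)\nu_0$ (possibly with correction terms), so that a multiplicative slack $\nu_1(C)\le(1+\delta)\nu_0(C)$ suffices while the objective value stays bounded away from $0.05$ because $\nu_0$ puts no mass on $(\neg f)^{-1}(1)$. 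In short: citing \cite[Lemma~8]{GJPW18} is exactly what the paper does and is fine; your attempted independent proof of it, as written, does not go through.
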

\begin{lemma}[\boldmath $\lor$ harder than $\neg$]\label{neg_to_or}
For every $\delta>0$ there exists an $\epsilon=\epsilon(\delta)>0$ such that for every boolean function $f$, we have $\andeg{\epsilon}(\ORf)\geq \Omega(\andeg{\delta}(\neg f))$. Moreover, \smash{$\epsilon \coloneqq \big(\frac{\ln(1+\delta)}{\lceil \log_{3/4}\delta\rceil}\big)^2$}.
\end{lemma}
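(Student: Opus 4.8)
The plan is to argue by LP duality, converting a dual certificate for the hardness of $\neg f$ into one for the hardness of $\ORf$. Fix $\delta>0$ and set $d\coloneqq\andeg{\delta}(\neg f)$. By strong duality of the \cref{junta_dual} programme, there is a certificate $\Phi\colon\bool^n\to\bR$ with $\|\Phi\|\le 1$, with $\langle\Phi,C\rangle\le 0$ for every conjunction $C$ of width less than $d$, and with $\langle\Phi,\neg f\rangle>\delta$. I want to build from $\Phi$ a certificate witnessing $\andeg{\epsilon}(\ORf)\ge\Omega(d)$.

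The idea comes from the identity $1-\ORf(xy)=\bigl(1-f(x)\bigr)\bigl(1-f(y)\bigr)=\neg f(x)\cdot\neg f(y)$ -- the complement of an \textsc{or} is a \emph{product} of complements -- which begs us to tensor the certificate. Set $\Psi(xy)\coloneqq-\Phi(x)\Phi(y)$ on $\bool^n\times\bool^n$. Then $\|\Psi\|=\|\Phi\|^2\le 1$; any conjunction $C$ of width less than $d$ over the $2n$ variables factors as $C=C_x\land C_y$ with $\mathrm{wid}(C_x),\mathrm{wid}(C_y)<d$, whence $\langle\Psi,C\rangle=-\langle\Phi,C_x\rangle\langle\Phi,C_y\rangle\le 0$ is a product of two nonpositive reals; and, using the identity,
\[
\langle\Psi,\ORf\rangle \;=\; \langle\Phi,\neg f\rangle^2-\Bigl(\textstyle\sum_x\Phi(x)\Bigr)^{\!2}.
\]
Thus $\Psi$ certifies $\andeg{\epsilon}(\ORf)\ge d$ as soon as $\langle\Phi,\neg f\rangle^2-\bigl(\sum_x\Phi(x)\bigr)^2>\epsilon$; since $\langle\Phi,\neg f\rangle^2>\delta^2$, this holds trivially when the ``imbalance'' $\bigl|\sum_x\Phi(x)\bigr|$ is small, and then one even gets the clean (lossless) bound $\andeg{\delta^2/4}(\ORf)\ge\andeg{\delta}(\neg f)$.

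The difficulty -- and the step I expect to be the crux -- is that the given $\Phi$ need not be balanced: applying $\langle\Phi,C\rangle\le 0$ to the empty conjunction only yields $\sum_x\Phi(x)\le 0$, and this can be as large in absolute value as $\approx 1$, drowning out the $\delta^2$ signal. So before tensoring, $\Phi$ must be massaged into an (essentially) balanced certificate while its correlation with $\neg f$ is kept above $\sqrt{\epsilon}=\ln(1+\delta)/\lceil\log_{3/4}\delta\rceil$. I would do this by an iterative rebalancing over $t\coloneqq\lceil\log_{3/4}\delta\rceil$ rounds: in each round replace $\Phi$ by a convex combination of $\Phi$ with a reweighted/shifted copy of itself that contracts the imbalance by a factor close to $\tfrac34$ while preserving the sign constraints $\langle\Phi,C\rangle\le 0$; calibrating the moves so that the cumulative multiplicative loss to $\langle\Phi,\neg f\rangle$ over the $t$ rounds is $\exp(-\Theta(\ln(1+\delta)))$ leaves the correlation above $\sqrt{\epsilon}$ once the imbalance has been annihilated. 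Making the $\tfrac34$-contraction, the feasibility of the sign constraints, and the $\ln(1+\delta)$ correlation loss hold simultaneously is exactly what pins down the stated value of $\epsilon(\delta)$, and the constant-factor slack in $\Omega(d)$ (rather than $d$) is what I expect to need in order to push the rebalancing through. (An equivalent, primal, picture: a small-width conical junta for $\ORf$ yields -- by restricting and recombining its two blocks -- a small-width conical junta approximating $\neg f$ up to a negative ``correction term'', and the iteration is precisely what shrinks the mass of that correction below $\delta$.)
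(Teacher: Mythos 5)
Your tensoring step is exactly the paper's: the certificate $\Psi(x,y)=-\Phi(x)\Phi(y)$, the factoring of conjunctions, and the identity $\langle\Psi,\ORf\rangle=\langle\Phi,\neg f\rangle^2-\langle\Phi,1\rangle^2$ all appear (in the equivalent product form $\langle\Phi,-f\rangle\cdot\langle\Phi,2-f\rangle$) in the paper's \cref{negp_to_or}. But the step you yourself flag as the crux --- turning the given certificate into an (essentially) balanced one while keeping its correlation with $\neg f$ above $\sqrt{\epsilon}$ --- is not carried out, and the mechanism you sketch does not obviously exist. Every feasible solution of the \cref{junta_dual} programme satisfies $\langle\Phi,1\rangle\le 0$ (the empty conjunction is in $\cC_d^n$), so a convex combination of $\Phi$ with any feasible ``copy of itself'' again has nonpositive sum and can only shrink the imbalance if the other ingredient is already nearly balanced --- which is circular; while shifting $\Phi$ by a positive constant destroys the constraints $\langle\Phi,C\rangle\le 0$. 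By LP duality, the existence of a nearly balanced certificate of value about $\sqrt{\epsilon}$ is equivalent to hardness of approximating $\neg f$ by a low-degree conical junta \emph{plus an arbitrary real constant}, and that is a genuinely new statement requiring its own proof; nothing in your outline supplies it.

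The paper avoids rebalancing altogether. It asks the certificate to have large correlation with $2-f$ rather than with $\neg f=1-f$; then the imbalance works in one's favour, since $\langle\Phi,1\rangle\le 0$ gives $\langle\Phi,-f\rangle\ge\langle\Phi,2-f\rangle$ and hence $\langle\Psi,\ORf\rangle\ge\langle\Phi,2-f\rangle^2>\epsilon^2$ with no loss (\cref{negp_to_or}). The error degradation that you tried to attribute to a dual contraction argument is instead paid on the \emph{primal} side, in \cref{neg_to_negp}: if a conical $d$-junta $g$ $\epsilon$-approximates $2-f=1+\neg f$, then $g'=((g+\epsilon)/2)^k$ with $k=\lceil\log_{3/4}\delta\rceil$ is a conical $O(d)$-junta that $\delta$-approximates $\neg f$; this powering is where the constants $3/4$ and $\ln(1+\delta)$ in $\epsilon(\delta)$ actually come from. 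So your first half is right and coincides with the paper, but the second half is a gap: either supply a genuine construction of balanced certificates (equivalently, hardness against constant-shifted conical juntas), or replace it by the paper's $2-f$ trick plus the powering argument.
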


It remains to prove \cref{neg_to_or}. We do it in two steps. In \cref{negp_to_or} we show that the approximate nonnegative degree of $\ORf$ is at least that of $2-f$ by exhibiting a dual certificate. Then in \cref{neg_to_negp} we show that the approximate nonnegative degree of $2-f=1+\neg f$ is at least that of~$\neg f$ via a powering trick. The error parameter $\epsilon$ will degrade in both of these steps. (We will later see that this degradation is, in fact, unavoidable; see \cref{sec:app}.)

\begin{claim}\label{negp_to_or}
We have $\andeg{\epsilon^2}(\ORf)\geq \andeg{\epsilon}(2-f)$ for any boolean-valued $f$ and error $\epsilon$.
\end{claim}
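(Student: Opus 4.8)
I would argue via LP duality, using the dual characterisation of $\andeg{}$ recalled above: it suffices to show that for every $d$, any feasible solution $\Phi$ to \cref{junta_dual} for the function $2-f$ (over $n$ variables, with conjunction class $\cC_d^n$) that achieves value $\langle\Phi,2-f\rangle>\epsilon$ can be transformed into a feasible solution $\Psi$ to \cref{junta_dual} for $\ORf$ (over $2n$ variables, with class $\cC_d^{2n}$) achieving value $\langle\Psi,\ORf\rangle>\epsilon^2$. Granting this, $\andeg{\epsilon}(2-f)>d$ implies $\andeg{\epsilon^2}(\ORf)>d$; taking $d=\andeg{\epsilon}(2-f)-1$ yields the claim.

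\textbf{The construction.} Given such a $\Phi\colon\bool^n\to\bR$ (so $\|\Phi\|\le 1$, $\langle\Phi,C\rangle\le0$ for all $C\in\cC_d^n$, and $\langle\Phi,2-f\rangle>\epsilon$), I would set $\Psi(x,y)\coloneqq-\Phi(x)\Phi(y)$ for $x,y\in\bool^n$ and check the three requirements. The norm bound is immediate: $\|\Psi\|=\sum_{x,y}|\Phi(x)|\,|\Phi(y)|=\|\Phi\|^2\le 1$. For the conjunction constraints, observe that any $C\in\cC_d^{2n}$ factors as $C=C_x\wedge C_y$, where $C_x$ consists of the literals of $C$ on the first block of variables and $C_y$ of those on the second block, so $C(x,y)=C_x(x)\,C_y(y)$ and $\mathrm{width}(C_x)+\mathrm{width}(C_y)\le d$. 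Hence $\langle\Psi,C\rangle=-\langle\Phi,C_x\rangle\langle\Phi,C_y\rangle$, and since both factors are $\le 0$ (each of $C_x,C_y$ has width $\le d$), we get $\langle\Psi,C\rangle\le 0$.

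\textbf{The value bound.} This is the only point requiring a small observation. Write $g\coloneqq\neg f$ and set $a\coloneqq\sum_x\Phi(x)$, $b\coloneqq\sum_x\Phi(x)g(x)$. Using the boolean identity $f(x)\vee f(y)=1-g(x)g(y)$ we compute $\langle\Psi,\ORf\rangle=-\sum_{x,y}\Phi(x)\Phi(y)\bigl(1-g(x)g(y)\bigr)=-(a^2-b^2)=b^2-a^2$, while $\langle\Phi,2-f\rangle=\sum_x\Phi(x)(1+g(x))=a+b>\epsilon$. Now apply the conjunction constraint to the width-zero conjunction (the constant $1$): this gives $a=\langle\Phi,1\rangle\le0$. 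Therefore $b-a\ge b+a>\epsilon$, and so $b^2-a^2=(b-a)(b+a)>\epsilon^2$, as needed.

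\textbf{Main obstacle.} There is essentially no heavy computation here; the only thing to ``discover'' is that one must \emph{negate} the tensor square $\Phi\otimes\Phi$ (the naive choice $\Psi=\Phi\otimes\Phi$ has value $a^2-b^2$, which is $\le 0$ once $a\le0$), together with the fact that the degenerate constraint $\langle\Phi,1\rangle\le0$ pins down the sign $a\le0$ that makes $(b-a)(b+a)$ exceed $\epsilon^2$ rather than merely being positive. Everything else is routine LP bookkeeping, including the block-factorisation of width-$d$ conjunctions over the doubled variable set.
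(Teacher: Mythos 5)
Your proposal is correct and follows essentially the same route as the paper: the same negated tensor-product certificate $\Psi(x,y)=-\Phi(x)\Phi(y)$, the same norm and conjunction-factorisation checks, and the same use of the width-zero (constant-$1$) constraint to fix the sign. Your value computation $b^2-a^2=(b-a)(b+a)>\epsilon^2$ is just a rewriting of the paper's factorisation $\langle\Phi,-f\rangle\cdot\langle\Phi,2-f\rangle\geq\langle\Phi,2-f\rangle^2$, so the two arguments coincide.
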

\begin{proof}
Let $d\coloneqq \andeg{\epsilon}(2-f)$ and let $\Psi\colon \bool^n\to \bool$ be a dual certificate witnessing this. That is, $\left<\Phi,2-f\right> > \epsilon$, $\|\Phi\|\le 1$, and $\left<\Phi, C\right> \le 0$ for all $C\in \cC_{d-1}^n$. To construct a dual certificate~$\Psi^\lor\colon \bool^{2n}\to \bool$ witnessing $\andeg{\epsilon^2}(\ORf)\ge d$, we consider the negated tensor product (which was found by an educated guess)
\[
\Phi^\lor(x,y) ~\coloneqq~ -\Phi(x)\Phi(y).
\]
It remains to check that this is feasible for the dual programme and also that $\left<\Phi^\lor,\ORf\right>>\epsilon^2$.
\begin{enumerate}
\item $\left\|\Phi^\lor\right\|=\sum_{x,y} |\Phi^\lor(x,y)|=\sum_{x,y} |\Phi(x)\Phi(y)|=\sum_{x,y} |\Phi(x)|\cdot |\Phi(y)|=\left\|\Phi\right\|^2\le 1$.

\item For any conjunction $C\in \cC_{d-1}^{2n}$, we write $C(x,y)=C_1(x)C_2(y)$ where $C_1,C_2\in\cC_{d-1}^n$. Now
\begin{align*}\textstyle
\left<\Phi^\lor,C\right>~
&\textstyle=~\sum_{x,y} \Phi^\lor(x,y)C(x,y) \\
&\textstyle=~\sum_{x,y} -\Phi(x)\Phi(y)\cdot C_1(x)C_2(y)\\
&\textstyle=~-\big[\sum_x \Phi(x)C_1(x)\big]\big[\sum_y \Phi(y)C_2(y)\big] \\
&\textstyle=~-\left<\Phi,C_1\right>\left<\Phi,C_2\right> \\
&\textstyle\le~ 0.
\end{align*}

\item Observe that $\left<\Phi,-f\right>\geq \left<\Phi,2-f\right>$ since $1\in \cC_{d-1}^n$ for the constant-1 function. Thus
\begin{align*}
\left<\Phi^\lor,\ORf\right>
&\textstyle~=~\sum_{x,y} \Phi^\lor(x,y)(f(x)+f(y)-f(x)f(y))\\
&\textstyle~=~\sum_{x,y} -\Phi(x)\Phi(y)(f(x)+f(y)-f(x)f(y))\\
&\textstyle~=~\sum_{x,y} -\Phi(x)\Phi(y)(2f(x)-f(x)f(y))\\
&\textstyle~=~\sum_{x} -\Phi(x)f(x)\cdot \big[ \sum_{y} \Phi(y)(2-f(y)) \big]\\
&\textstyle~=~\left<\Phi,-f\right>\cdot\left<\Phi,2-f\right>\\
&\textstyle~\geq~\left<\Phi,2-f\right>\cdot\left<\Phi,2-f\right> \tag{above observation}\\
&\textstyle~>~ \epsilon^2. \qedhere
\end{align*}
\end{enumerate}
\end{proof}

\begin{claim}\label{neg_to_negp}
For any $\delta>0$ define \smash{$\epsilon\coloneqq\frac{\ln(1+\delta)}{\lceil \log_{3/4}\delta\rceil}>0$}. Then for any boolean-valued function~$f$ we have  $\andeg{\epsilon}(1+f) \geq \Omega(\andeg{\delta}(f))$.
\end{claim}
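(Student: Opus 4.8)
The plan is to prove the claim by an amplification (powering) argument: if $1+f$ is approximated by a low-degree conical junta, then raising a suitably rescaled copy of that junta to a constant power yields a conical junta that approximates $f$ itself, at the cost of multiplying the degree by a factor depending only on $\delta$. Concretely, let $d\coloneqq\andeg{\epsilon}(1+f)$ and fix a conical junta $g$ of nonnegative degree at most $d$ with $|g(x)-(1+f(x))|\le\epsilon$ for all $x\in\bool^n$. Since $f$ is boolean, $g(x)\in[1-\epsilon,1+\epsilon]$ whenever $f(x)=0$ and $g(x)\in[2-\epsilon,2+\epsilon]$ whenever $f(x)=1$, so $g(x)/2$ is bounded above by $3/4$ in the first case and lies within $\epsilon/2$ of $1$ in the second. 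I would then set $k\coloneqq\lceil\log_{3/4}\delta\rceil$ and take $h\coloneqq(g/2)^k$ as the candidate approximation of $f$.

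Two things then need to be checked. First, $h$ is a conical junta of degree at most $kd$: writing $g=\sum_i w_iC_i$ with $w_i\ge 0$ and each $C_i\in\cC_d^n$, multiplying out gives $h=\sum_{i_1,\dots,i_k}(w_{i_1}\cdots w_{i_k}/2^k)\,C_{i_1}\cdots C_{i_k}$, a nonnegative combination of products of width-$d$ conjunctions, and each such product is a conjunction on at most $kd$ variables (or is identically zero). Second, and this is the crux, the pointwise error bound $|h(x)-f(x)|\le\delta$, which I would verify by cases. If $f(x)=1$, then $g(x)/2\in[1-\epsilon/2,1+\epsilon/2]$, so $h(x)\le e^{k\epsilon/2}=(1+\delta)^{1/2}\le 1+\delta$ and, by Bernoulli, $h(x)\ge 1-k\epsilon/2=1-\tfrac12\ln(1+\delta)\ge 1-\delta$; hence $|h(x)-1|\le\delta$. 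If $f(x)=0$, then $g(x)/2\in[0,(1+\epsilon)/2]\subseteq[0,3/4]$ (using $\epsilon\le 1/2$, which holds throughout the relevant range $0<\delta<1/2$), so $0\le h(x)\le(3/4)^k\le(3/4)^{\log_{3/4}\delta}=\delta$; hence $|h(x)-0|\le\delta$. Together these give $\andeg{\delta}(f)\le kd=k\cdot\andeg{\epsilon}(1+f)$, and since $k$ depends only on the constant $\delta$, we conclude $\andeg{\epsilon}(1+f)\ge\Omega(\andeg{\delta}(f))$.

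The argument has no genuine obstacle; the only delicate point is calibrating $k$ and $\epsilon$ against each other. Raising the rescaled junta to the $k$-th power amplifies the useful separation ($3/4$ versus $1$) but also the multiplicative error near $1$: $k$ must be large enough that $(3/4)^k\le\delta$ crushes the $f=0$ branch, while $\epsilon$ must be small enough that $k\epsilon/2\le\ln(1+\delta)$ keeps the $f=1$ branch within $\delta$ of $1$. The stated choices $k=\lceil\log_{3/4}\delta\rceil$ and $\epsilon=\ln(1+\delta)/k$ satisfy both simultaneously, with a small amount of slack. One should also record at the outset that $g/2$ remains nonnegative and at most $3/4$ on the $f=0$ inputs, which is exactly where the mild assumption $\epsilon\le 1/2$ (hence $0<\delta<1/2$) is used.
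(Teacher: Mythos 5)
Your proof is correct and follows essentially the same powering argument as the paper; the only cosmetic difference is that the paper powers $((g+\epsilon)/2)^k$ so that the values on $f=1$ inputs are automatically at least $1$, whereas you power $(g/2)^k$ and control the downward deviation on the $f=1$ branch via Bernoulli's inequality, which works just as well with the same $k=\lceil\log_{3/4}\delta\rceil$ and $\epsilon=\ln(1+\delta)/k$. Just add the paper's one-line remark that the case $\delta\ge 1/2$ is trivial (the constant $1/2$ shows $\andeg{\delta}(f)=0$), since you implicitly restrict to $\delta<1/2$ in order to have $\epsilon\le 1/2$.
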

\begin{proof}
We may assume $\delta<1/2$ (and hence $\epsilon<1/4$) as otherwise the claim is trivial.
Suppose $\andeg{\epsilon}(1+f)=d$ is witnessed by a conical $d$-junta $g$ that $\epsilon$-approximates $1+f$. Define $g'\coloneqq ((g+\epsilon)/2)^k$ where the exponent is $k\coloneqq \lceil \log_{3/4}\delta\rceil$. By multiplying out the terms in this definition, we see that $g'$ has nonnegative degree $kd=O(d)$. We claim that $g'$ is a $\delta$-approximation of $f$. Indeed, if $f(x)=0$, then $g'(x)\le (1/2+\epsilon)^k\le (3/4)^k\le \delta$. If $f(x)=1$, then $1\le (g(x)+\epsilon)/2 \le 1+\epsilon$, and thus $1\le g'(x)\le (1+\epsilon)^k\le \exp(\epsilon k)\le 1+\delta$.
\end{proof}

\begin{proof}[Proof of \Cref{neg_to_or}]
Using \Cref{negp_to_or} and \Cref{neg_to_negp} (but with $\neg f$ in place of $f$), we have, for any $\delta>0$ and \smash{$\epsilon\coloneqq\frac{\ln(1+\delta)}{\lceil \log_{3/4}\delta\rceil}$}:
\begin{equation*}
\andeg{\epsilon^2}(\ORf)
~\geq~ 
\andeg{\epsilon}(2-f)
~=~
\andeg{\epsilon}(1+\neg f)
~\geq~
\Omega(\andeg{\delta}(\neg f)). \qedhere
\end{equation*}
\end{proof}

\subsection{Unambiguous protocols and nonnegative rank}

Our goal will be to \emph{lift} the hardness of the $\lor$-operation (\cref{thm:hard-or}) to communication complexity. In this subsection, we recall the concepts that are needed for this goal, namely, unambiguous protocols, (approximate) nonnegative rank, and a lifting theorem from nonnegative degree to nonnegative rank~\cite{GLMWZ16,Kothari2021}.

\subparagraph{Unambiguous protocols.} 
Recall from \cref{sub:com-com} the notions of nondeterministic protocols and rectangle covers.
For a two-party function $F\colon X \times Y \to \{0,1\}$, the \emph{partition number} $\Par_1(F)$ is the least number of \emph{pairwise disjoint} rectangles that cover~$F^{-1}(1)$.
Note that $\Cov_1(F) \le \Par_1(F)$.
The \emph{unambiguous communication complexity} of~$F$ is defined as $\Una_1(F) \coloneqq \log_2 \Par_1(F)$.
Note that $\Non_1(F) \le \Una_1(F)$.
Unambiguous communication complexity can be interpreted as the least communication cost of a nondeterministic protocol that has at most one accepting computation on every input. We also have the following folklore lemma, proved the same way as \cref{lem:NFA-CC}, which states that UFAs are simulated by unambiguous protocols.%
\begin{lemma} \label{lem:UFA-CC}
If a two-party function $F\colon \{0,1\}^{m_1} \times \{0,1\}^{m_2} \to \{0,1\}$ admits an UFA with $s$~states, then $\Par_1(F) \le s$ (that is, $\Una_1(F)\leq\log s$). \qed
\end{lemma}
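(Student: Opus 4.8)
The plan is to mimic the proof of \cref{lem:NFA-CC} line by line, and then upgrade the rectangle cover it produces into a \emph{partition} by invoking unambiguity. Concretely, I would start from a UFA $\A = (Q,\Sigma,\delta,I,F)$ with $L(\A) = \{xy \in \{0,1\}^{m_1+m_2} \mid F(x,y)=1\}$ and, for each state $q \in Q$, form the rectangle
\[
R_q ~\coloneqq~ \big(\{x \in \{0,1\}^{m_1} \mid \exists\, q_0 \in I.\ q_0 \xrightarrow{x} q\}\big) \times \big(\{y \in \{0,1\}^{m_2} \mid \exists\, f \in F.\ q \xrightarrow{y} f\}\big).
\]
Just as in \cref{lem:NFA-CC}, each $R_q$ is a rectangle contained in $F^{-1}(1)$ — gluing a run $q_0 \xrightarrow{x} q$ to a run $q \xrightarrow{y} f$ yields an accepting run on $xy$ — and the $R_q$'s together cover $F^{-1}(1)$, since any accepting run on a word $xy \in L(\A)$ passes through some state $q$ after reading exactly the length-$m_1$ prefix $x$, putting $(x,y) \in R_q$. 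Discarding empty rectangles leaves at most $|Q| = s$ of them.

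The one extra step, and the only place the UFA hypothesis does any work, is checking that these rectangles are pairwise disjoint. I would argue by contradiction: if $(x,y) \in R_q \cap R_{q'}$ with $q \neq q'$, then there are two accepting runs on $xy$ — one visiting $q$ after the prefix $x$, the other visiting $q'$ at that same position — and they disagree at position $m_1$, hence are distinct, contradicting the defining property of a UFA that $xy$ has at most one accepting run. Therefore the nonempty $R_q$'s form a partition of $F^{-1}(1)$ into at most $s$ rectangles, so $\Par_1(F) \le s$ and $\Una_1(F) = \log_2 \Par_1(F) \le \log_2 s$.

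There is no genuine obstacle here: the statement is folklore and the argument is the NFA case plus the two-line disjointness observation above. If anything, the only point worth spelling out is that "distinct accepting runs" in the definition of a UFA is already witnessed by a single position at which the two runs disagree — here, position $m_1$ — so $q \neq q'$ really does contradict unambiguity.
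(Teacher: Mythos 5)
Your proof is correct and is exactly the argument the paper intends: the paper gives no separate proof for \cref{lem:UFA-CC}, stating only that it is ``proved the same way as \cref{lem:NFA-CC}'', and your write-up is that proof with the disjointness of the rectangles $R_q$ (via the at-most-one-accepting-run property, witnessed at position $m_1$) made explicit.
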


\subparagraph{Nonnegative rank.}
We often think of a two-party function $F\colon X \times Y \to \{0,1\}$ as a boolean matrix $F\in \{0,1\}^{X \times Y}$, sometimes called the \emph{communication matrix} of $F$. For a nonnegative matrix $M\in \bR_{\geq0}^{X \times Y}$ we define its \emph{nonnegative rank}, denoted $\nrank(M)$, as the least $r$ such that~$M$ can be written as a sum of $r$ nonnegative rank-1 matrices, i.e., $M=\sum_{i=1}^r u_iv_i^{\textrm{T}}$, where $u_i\in\bR^X_{\geq0}$ and $v_i\in\bR^Y_{\geq0}$ are nonnegative vectors. Note that for a boolean matrix~$F$,
\begin{equation}\label{eq:una-nrank}
\Par_1(F) ~\geq~ \nrank(F) \qquad \text{and thus}\qquad \Una_1(F) ~\geq~ \log\nrank(F).
\end{equation}
Indeed, if $F^{-1}(1)$ can be partitioned into $r$ rectangles, $F^{-1}(1)= R_1\sqcup\cdots\sqcup R_r$, then $F$ can be written as a sum of $r$ nonnegative rank-1 matrices, $F=M_1+\cdots+M_r$, where $M_i$ is $1$ on the rectangle $R_i$ and $0$ elsewhere. As with nonnegative degree, we define an approximate version of nonnegative rank. 
The \emph{$\epsilon$-approximate nonnegative rank} of $M$, denoted $\anrank{\epsilon}(M)$, is defined as the least $\nrank(N)$ over all nonnegative matrices $N$ that $\epsilon$-approximate $M$, i.e.,
\[
|M_{ij}-N_{ij}| ~\leq~ \epsilon \qquad \text{for all } i,j.
\]

\subparagraph{Nonnegative lifting.}
Finally, we formulate a theorem that lifts lower bounds on the nonnegative degree of an $n$-bit boolean function $f$ to the nonnegative rank of the composed function $F=f\circ g^n$ (which was defined in \cref{sub:com-com}).

\begin{theorem}[{\cite{GLMWZ16,Kothari2021}}] \label{thm:ndeg-lifting}
Fix constants $\delta>\epsilon>0$. For any $n \in \N$ there is a gadget $g\colon \{0,1\}^b \times \{0,1\}^b \to \{0,1\}$ with $b = \Theta(\log n)$ such that for any $f\colon \{0,1\}^n \to \{0,1\}$ we have
\[
\log\anrank{\epsilon}(f\circ g^n) ~\geq~ \Omega(\deg^+_\delta(f)\cdot b). \tag*{\qed}
\]
\end{theorem}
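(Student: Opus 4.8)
The plan is to argue by \emph{simulation}: from a small-nonnegative-rank $\epsilon$-approximation of $F:=f\circ g^n$ we will read off a small-nonnegative-degree $\delta$-approximation of $f$. This is the ``rectangles are nonnegative juntas'' strategy of~\cite{GLMWZ16}, with the refinements of~\cite{Kothari2021}. Take $g$ to be the gadget supplied by those works---a two-party function on $b=\Theta(\log n)$ bits with strong pseudorandomness (small-discrepancy / regularity) properties---and write $g^{-n}(z):=\{(x,y):(g(x_1,y_1),\dots,g(x_n,y_n))=z\}$ for $z\in\bool^n$, so that a uniformly random $(x,y)\in g^{-n}(z)$ satisfies $F(x,y)=f(z)$. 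Suppose $N$ is a nonnegative matrix with $\nrank(N)=r$ and $|N_{xy}-F(x,y)|\le\epsilon$ everywhere; the goal is to manufacture a conical junta of degree $d=O((\log r)/b)$ that $\delta$-approximates $f$, which gives $\deg^+_\delta(f)\le d$, equivalently $\log\anrank{\epsilon}(F)=\log r=\Omega(\deg^+_\delta(f)\cdot b)$.

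First I would bring the rank decomposition into a clean combinatorial form. Writing $N=\sum_{i\le r}u_iv_i^{\mathrm T}$, one truncates and rounds the entries of the $u_i,v_i$ onto a fine grid and discards negligible terms; at the price of enlarging the error from $\epsilon$ to some $\epsilon_1$ still bounded away from $\delta$, this rewrites $N=\sum_{i\le r_1}w_i\mathbf 1_{R_i}$ as a nonnegative combination of indicators of combinatorial rectangles $R_i\subseteq\bool^{bn}\times\bool^{bn}$, with $r_1=\poly(r)$, $w_i\ge 0$, and $\sum_i w_i$ under control. Next, a standard \emph{density-restoration} (cleanup) step---repeatedly deleting the rows and columns of each $R_i$ on which it is too thin and folding the removed mass back in---lets one further assume, again at a small controlled cost to the error, that every surviving rectangle is \emph{dense}: $\mu(R_i)\ge 1/\poly(r_1)$ under the uniform distribution on $\bool^{bn}\times\bool^{bn}$.

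The technical heart of the argument is the structural lemma that \emph{dense rectangles project to low-degree nonnegative juntas}. For a dense $R$, let $h_R\colon\bool^n\to[0,1]$ be $h_R(z):=\mu(R\cap g^{-n}(z))/\mu(g^{-n}(z))$, the probability that a uniform preimage of $z$ lands in $R$. I would invoke---and this is the crux of~\cite{GLMWZ16,Kothari2021}---that $h_R$ is pointwise within any desired constant $\epsilon_0$ of an honest nonnegative $d$-junta $q_R$, where $d=O((\log r_1)/b)=O((\log r)/b)$; the constant in $b=\Theta(\log n)$ is chosen large enough to make $\epsilon_0$ as small as needed. The proof of this lemma is exactly where the gadget's pseudorandomness is exploited: one extracts a small coordinate set $I\subseteq[n]$ so that, conditioned on membership in $R$ and on the $g$-outputs at the coordinates of $I$, the $g$-outputs at the remaining coordinates are jointly $\epsilon_0$-close to uniform, whence $h_R$ depends (up to $\epsilon_0$) only on the restriction of $z$ to $I$.

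Finally I would assemble $h:=\sum_{i\le r_1}w_i\,q_{R_i}$. This is a conical $d$-junta by construction, and for every $z\in\bool^n$,
\[
h(z)~\approx~\sum_i w_i\, h_{R_i}(z)~=~\mathop{\mathbb E}_{(x,y)\sim g^{-n}(z)}\!\big[\,N(x,y)\,\big]~\approx~f(z),
\]
where the first $\approx$ is controlled by the junta errors weighted by $\sum_i w_i$ and the second by $\epsilon_1$; tuning the various constants so the accumulated error is at most $\delta$ completes the argument. I expect the main obstacle to be precisely the ``rectangles are nonnegative juntas'' lemma of the third paragraph, together with the bookkeeping that keeps every intermediate error strictly below $\delta$---which is why the statement must assume the constant gap $\delta>\epsilon$ and why it yields an $\Omega(\cdot)$ rather than an exact bound.
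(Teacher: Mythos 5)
This statement is not proved in the paper at all: Theorem~\ref{thm:ndeg-lifting} is imported as a black box from \cite{GLMWZ16,Kothari2021}, so there is no internal proof to compare against, and your sketch should be judged against the argument in those works. At the level of architecture you have reproduced it faithfully: discretise the nonnegative factorisation into a weighted sum of rectangles, approximate each rectangle's induced function $z\mapsto\mu(R\cap (g^n)^{-1}(z))/\mu((g^n)^{-1}(z))$ by a low-degree conical junta using the gadget's pseudorandomness, and sum up. (Your ``density restoration'' pass is closer to the deterministic/randomised lifting proofs than to what is strictly needed here, but it is harmless.)

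There is, however, a genuine gap in the error accounting, and it sits exactly at the step you call the crux. You invoke the key lemma in the form ``$h_R$ is pointwise within any desired \emph{constant} $\epsilon_0$ of a nonnegative $d$-junta with $d=O((\log r)/b)$,'' and then bound the assembly error by ``the junta errors weighted by $\sum_i w_i$.'' But after discretisation (and even after your density-restoration step) the total weight $\sum_i w_i$ is only bounded by $\mathrm{poly}(r)$ --- the quantity that is $O(1)$ is $\sum_i w_i\,\mu(R_i)\le 1+\epsilon$, not $\sum_i w_i$ --- so a constant per-rectangle error $\epsilon_0$ contributes up to $\epsilon_0\cdot\mathrm{poly}(r)$ to the final approximation of $f$, which no choice of constants brings below $\delta$. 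What \cite{GLMWZ16,Kothari2021} actually prove, and what the assembly needs, is the stronger form of the rectangles-are-juntas lemma: a \emph{multiplicative} $(1\pm\epsilon_0)$ approximation together with an \emph{additive} error of order $2^{-\Theta(db)}$, which for $d=\Theta((\log r)/b)$ is $r^{-\Theta(1)}$ and hence survives multiplication by $\sum_i w_i=\mathrm{poly}(r)$, while the multiplicative part is charged against $\sum_i w_i h_{R_i}(z)=\mathbb{E}_{(x,y)\sim (g^n)^{-1}(z)}[N(x,y)]\le 1+\epsilon$. With the lemma stated in your constant-additive-error form the proof does not close; with the correct form it does, and the rest of your bookkeeping (truncation to rectangles, $F=f(z)$ on fibres, final tuning against the gap $\delta>\epsilon$) is in order.
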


\subsection{Proof of \Cref{thm:union} (and also \cref{thm:or})} \label{sec:proof-2}

We start with the function $f\colon \bool^n\to \bool$ given by \cref{thm:hard-or} such that for $m=\poly(n)$,
\begin{align}
\UC_1(f) &~\leq~ m, \label{eq:upper} \\
\deg^+_{1.5\times 10^{-5}}(f^\lor) &~\geq~ \tOmega(m^2). \label{eq:lower}
\end{align}
We then use the gadget $g$ on $b=\Theta(\log n)$ bits from the lifting theorem \cref{thm:ndeg-lifting} to construct $F\coloneqq f\circ g^n$. By the same argument as in \cref{sec:proof-1} we see that the resulting $F\colon \bool^{nb}\times \bool^{nb}\to\bool$ enjoys the following upper bounds, derived from \cref{eq:upper}.
\begin{itemize}
\item $F$ admits an unambiguous DNF of width $2bm=\tO(m)$.
\item $F$ admits an UFA of size $2^{\tO(m)}$.
\item $F$ admits an unambiguous protocol of cost $\Una_1(F)\leq \tO(m)$.
\end{itemize}
On the other hand, we note that $F^\lor = (f\circ g^n)^\lor = f^\lor \circ g^n$. Hence, we may combine \cref{eq:una-nrank}, \cref{thm:ndeg-lifting}, and \cref{eq:lower} to conclude that
\begin{equation}\label{eq:nrank-lower}
\Una_1(F^\lor)
~\geq~ \log \nrank_{10^{-5}}(F^\lor)
~\geq~ \Omega(\ndeg_{1.5\times10^{-5}}(f^\lor))
~\geq~ \tOmega(m^2).
\end{equation}
This finishes the proof of \cref{thm:or}. We proceed with the proof of \cref{thm:union}. To this end, we define two languages 
\begin{align*}
 L_1 &~\coloneqq~ \{xx'yy' : x,x',y,y'\in \bool^{bn} \text{ and } F(x,y)=1\}, \\
 L_2 &~\coloneqq~ \{xx'yy' : x,x',y,y'\in \bool^{bn} \text{ and } F(x',y')=1\}.
\end{align*}
Both $L_1$ and $L_2$ admit UFAs of size $\poly(n)\cdot 2^{\tO(m)}=2^{\tO(m)}\eqqcolon N$. By contrast, we have~$L_1\cup L_2 = (F^\lor)^{-1}(1)$, and this union language requires UFAs of size $2^{\tOmega(m^2)}=N^{\tOmega(\log N)}$ by~\cref{eq:nrank-lower} and \cref{lem:UFA-CC}. This concludes the proof of \cref{thm:union}. \qed

\section{UFA Separation}

In this section, we prove \Cref{thm:separation}.
\separation*

Loosely speaking, in our construction, we define NFAs $\A_1, \A_2$ that recognize \emph{(sparse) set disjointness} and its complement.
For $n \in \N$ and $k \le n$ we define
\begin{align*}
\Disj^n_k\ &\coloneqq\ \{(S,T) \mid S \subseteq [n],\ T \subseteq [n],\ |S| = |T| = k,\ S \cap T = \emptyset\}\,.
\end{align*}
Define also
 $\enc{\Disj^n_k} \coloneqq \{\enc{S} \enc{T} \mid (S,T) \in \Disj^n_k\}$ where $\enc{S} \in  \{0,1\}^n$ is such that the $i$th letter of $\enc{S}$ is~$1$ if and only if $i \in S$, and similarly for $\enc{T}$.
Note that $\enc{S}, \enc{T}$ each contain $k$ times the letter~$1$.
To prove~\cref{thm:separation} it suffices to prove the following lemma.
\begin{lemma} \label{lem:separation}
For any $n \in \N$ let $k \coloneqq \lceil \log_2 n \rceil$.
There are NFAs $\A_1, \A_2$ with $n^{O(1)}$ states such that $L(\A_1) = \enc{\Disj^n_k}$ and $L(\A_2) = \bool^* \setminus \enc{\Disj^n_k}$.
Any UFA that recognizes $\enc{\Disj^n_k}$ has at least $n^{\Omega(\log n)}$ states.
\end{lemma}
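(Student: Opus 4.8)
The lemma has three parts: two polynomial-size NFAs and one quasipolynomial UFA lower bound. I would treat the lower bound as the soft part --- through the automata-to-protocol dictionary \cref{lem:UFA-CC} it collapses to a classical rank computation --- and concentrate the real work on the NFA for $\enc{\Disj^n_k}$ itself, since an NFA has to certify the universally-quantified property ``$S\cap T=\emptyset$''. That NFA is where I expect the main obstacle to be.

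For the complement automaton $\A_2$, the plan is to write $\bool^*\setminus\enc{\Disj^n_k}$ as a union of four regular languages: words of length $\ne 2n$; words of length $2n$ whose first block contains $\ne k$ ones; the same condition on the second block; and words of length $2n$ with some position carrying a $1$ in both blocks. The first three are recognised by $O(n)$-state counters. For the last I would use an NFA that guesses a collision index $i\in[n]$, stores it in its state, checks for a $1$ in the first block at position $i$, and checks for a $1$ at position $n+i$; crossed with a length counter this uses $O(n^2)$ states. Taking the union of these NFAs (disjoint union of state spaces, combined initial states) yields $\A_2$ with $\poly(n)$ states and $L(\A_2)=\bool^*\setminus\enc{\Disj^n_k}$.

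For $\A_1$ recognising $\enc{\Disj^n_k}$, I would first note that every monochromatic $1$-rectangle of sparse disjointness has the form $\{S : S\subseteq U\}\times\{T : T\subseteq V\}$ for a bipartition $[n]=U\sqcup V$, and then cover all disjoint pairs $(S,T)$ with $|S|=|T|=k$ by polynomially many such bipartitions. Concretely, a uniformly random bipartition separates a fixed such pair with probability $2^{-2k}$ (all $2k$ distinct elements land correctly), and there are at most $\binom{n}{k}^2\le n^{2k}$ pairs; since $k=\lceil\log_2 n\rceil$ we have $2^{2k}=O(n^2)$ and $\ln(n^{2k})=O(\log^2 n)$, so a union bound produces a family $\mathcal F$ of $N=O(n^2\log^2 n)$ bipartitions that separates every disjoint $k$-pair. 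Then $\A_1$ guesses $\pi\in\mathcal F$ at the start (one initial state per $\pi$), scans $\enc S$ keeping a position counter and a $1$-counter and only permits a ``$1$'' at position $i$ when $i\in U_\pi$, does the symmetric thing on $\enc T$ with $V_\pi$, and accepts iff the input has length $2n$ and each block contains exactly $k$ ones. Since transitions are allowed to depend on $(\pi,i)$ arbitrarily this is a legitimate NFA with $O(N\cdot n\cdot k)=\poly(n)$ states, and $L(\A_1)=\enc{\Disj^n_k}$ because $S\subseteq U_\pi$ and $T\subseteq V_\pi$ force $S\cap T=\emptyset$, while conversely every disjoint $k$-pair is caught by some $\pi\in\mathcal F$. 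I expect this to be the crux; note the choice $k=\Theta(\log n)$ is exactly what makes the union bound affordable (for $k=\sqrt n$ it would demand $2^{\Omega(\sqrt n)}$ bipartitions).

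For the lower bound, I would identify $\enc{\Disj^n_k}$ with the two-party function $F(x,y)$ on $\bool^n\times\bool^n$ that equals $1$ iff $x$ and $y$ each have exactly $k$ ones with disjoint supports, so that $F^{-1}(1)=\enc{\Disj^n_k}$ as a language. A UFA with $s$ states recognising this language gives $\Par_1(F)\le s$ by \cref{lem:UFA-CC}, hence $s\ge\nrank(F)\ge\rank(F)$ by \cref{eq:una-nrank} and the elementary bound $\nrank\ge\rank$ (over $\bR$). Deleting the all-zero rows and columns of the communication matrix of $F$ leaves exactly the $\binom{n}{k}\times\binom{n}{k}$ disjointness matrix on $k$-subsets, whose rank over $\bR$ is $\binom{n}{k}$ for $n\ge 2k$ --- a classical fact, and in essence Razborov's quadratic nondeterministic-versus-unambiguous separation~\cite{Razborov1990}. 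Therefore $s\ge\binom{n}{k}\ge (n/\log n)^{\log n}=2^{\Omega(\log^2 n)}=n^{\Omega(\log n)}$, as required. (Alternatively one may invoke Razborov's bound $\Una_1(\Disj^n_{\log n})=\Omega(\log^2 n)$ directly and combine it with \cref{lem:UFA-CC}.)
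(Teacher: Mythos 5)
Your proposal is correct and follows essentially the same route as the paper: the lower bound via \cref{lem:UFA-CC}, $\Par_1\ge\nrank\ge\rank$ and the full rank $\binom{n}{k}$ of the sparse disjointness matrix; $\A_2$ by guessing a common index; and $\A_1$ by Razborov's probabilistic-method covering with $\poly(n)$ hard-coded sets (your bipartitions $U_\pi\sqcup V_\pi$ are just the paper's sets $Z_i$ and their complements), followed by guess-and-check.
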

In the rest of the section we prove \cref{lem:separation} by following Razborov's analysis of sparse set disjointness~\cite{Razborov1990}. In particular, we will give a self-contained proof of the existence of polynomial-sized NFAs for $\enc{\Disj^n_k}$ and its complement, but the main argument also comes from communication complexity.

\subsection{Proof of \texorpdfstring{\cref{lem:separation}}{Lemma~\ref{lem:separation}}}

First we prove the statement on UFAs.
Write $\binom{[n]}{k} \coloneqq \{S \subseteq [n] \mid |S| = k\}$.
Let $F \colon \binom{[n]}{k} \times \binom{[n]}{k} \to \{0,1\}$ be the two-party function with $F(S,T) = 1$ if and only if $(S,T) \in \Disj^n_k$.
It is shown, e.g., in \cite[Example~2.12]{KushilevitzNisan} 
 that the communication matrix of $F$ has full rank, $\rank(F)=\binom{n}{k}$.
Let $F'\colon \{0,1\}^n \times \{0,1\}^n \to \{0,1\}$ be such that $F'(x,y) = 1$ if and only if $x y \in \enc{\Disj^n_k}$.
Then $F$ is a principal submatrix of~$F'$, so $\binom{n}{k} \le \rank(F')$.
Using~\cref{eq:una-nrank,lem:UFA-CC} it follows that any UFA, say~$\A$, that recognizes $\enc{\Disj^n_k}$ has at least $\binom{n}{k} \ge (\frac{n}{k})^k$ states.
With $k \coloneqq \lceil \log_2 n \rceil$, it follows that $\A$ has $n^{\Omega(\log n)}$ states.

It is easy to see that there is an NFA, $\A_2$, with $n^{O(1)}$~states and $L(\A_2) = \bool^* \setminus \enc{\Disj^n_k}$.
Indeed, we can assume that the input is of the form $\enc{S} \enc{T}$; otherwise $\A_2$ accepts.
NFA~$\A_2$ guesses $i \in [n]$ such that $i \in S \cap T$ and then checks it.

Finally, we show that there is an NFA, $\A_1$, with $n^{O(1)}$~states and $L(\A_1) = \enc{\Disj^n_k}$.
We can assume that the input is of the form $\enc{S} \enc{T}$; otherwise $\A_1$ rejects.
NFA~$\A_1$ ``hard-codes'' polynomially many sets $Z_1, \ldots, Z_\ell \subseteq [n]$.
It guesses $i \in [\ell]$ such that $S \subseteq Z_i$ and $Z_i \cap T = \emptyset$ and then checks it.
It remains to show that there exist $\ell = n^{O(1)}$ sets $Z_1, \ldots, Z_\ell \subseteq [n]$ such that for any $(S,T) \in \Disj^n_k$ there is $i \in [\ell]$ with $S \subseteq Z_i$ and $Z_i \cap T = \emptyset$.
The argument uses the probabilistic method and is due to~\cite{Razborov1990}; see also \cite[Example~2.12]{KushilevitzNisan}.
We reproduce it here due to its elegance and brevity.

Fix $(S,T) \in \Disj^n_k$.
Say that a set $Z \subseteq [n]$ \emph{separates}~$(S,T)$ if $S \subseteq Z$ and $Z \cap T = \emptyset$.
A random set $Z \subseteq [n]$ (each $i$ is in~$Z$ with probability~$1/2$) separates~$(S,T)$ with probability~$2^{-2 k}$.
Thus, choosing $\ell \coloneqq \big\lceil 2^{2 k} \ln \binom{n}{k}^2 \big\rceil = n^{O(1)}$ random sets~$Z \subseteq [n]$ independently, the probability that none of them separates~$(S,T)$ is
\[\textstyle
 (1 - 2^{-2 k})^\ell  ~<~ e^{-2^{-2 k} \ell}  ~\le~ \binom{n}{k}^{-2}\,.
\]
By the union bound, since $|\Disj^n_k| < \binom{n}{k}^2$, the probability that there exists $(S,T) \in \Disj^n_k$ such that none of $\ell$ random sets separates~$(S,T)$ is less than~$1$.
Equivalently, the probability that for all $(S,T) \in \Disj^n_k$ at least one of $\ell$ random sets separates~$(S,T)$ is positive.
It follows that there are $Z_1, \ldots, Z_\ell \subseteq [n]$ such that each~$(S,T) \in \Disj^n_k$ is separated by some~$Z_i$. 
\qed


\section{Bonus result: Approximate nonnegative rank} \label{sec:app}

In this section, we prove \cref{thm:error}.

\error*

We first illustrate the idea in the context of nonnegative degree. In contrast to \cref{thm:hard-or} (which states that $\lor$ is hard to approximate to within tiny error), we show that the $\lor$-operation is, in fact, easy to approximate when we allow large enough error.
\begin{claim} \label{cl:or}
For any boolean-valued $f$, we have $\andeg{1/4}(\ORf)\le \ndeg(f)$.
\end{claim}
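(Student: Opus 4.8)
The plan is to exhibit an explicit conical junta of degree $\ndeg(f)$ that approximates $\ORf$ to within additive error $1/4$ everywhere. The natural starting point is a conical junta $g = \sum_i w_i C_i$ of degree $d\coloneqq\ndeg(f)$ that computes $f$ exactly, i.e.\ $g(z)=f(z)$ for all $z\in\bool^n$, with $w_i\geq 0$ and $\mathrm{width}(C_i)\leq d$. The obstacle is that the obvious candidate for $\ORf$ built from $g$, namely $g(x)+g(y)-g(x)g(y)$, is \emph{not} a conical junta: the product term $-g(x)g(y)$ carries a negative coefficient. So the real content of the claim is finding a genuinely nonnegative combination of low-width conjunctions (now over the $2n$ variables $xy$) that stays within $1/4$ of $f(x)\lor f(y)$.

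The key idea I would use is that $f(x)\lor f(y) = 1 - (1-f(x))(1-f(y))$, and $1-f$ is itself boolean, so I want to \emph{approximate} rather than compute it. Concretely, consider $g'(x,y)\coloneqq g(x)+g(y)$, which is a conical junta of degree $d$ over $xy$ (just take the union of the two term-sets). On inputs where $f(x)\lor f(y)=0$ we have $g'=0$, which is within $1/4$; on inputs where exactly one of $f(x),f(y)$ equals $1$ we have $g'=1$, again exact; the only problem is the case $f(x)=f(y)=1$, where $g'=2$ but the target is $1$ — error $1$, far too large. To fix this I would instead take a \emph{scaled} version: set $h(x,y)\coloneqq \tfrac{3}{4}\bigl(g(x)+g(y)\bigr) - \tfrac{1}{4}\cdot\text{(something nonnegative that detects the both-ones case)}$ — but subtraction reintroduces negativity. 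The cleaner route, which I expect actually works, is to keep $h$ purely additive but choose the scalar so the worst cases are balanced: with $h(x,y)=\alpha(g(x)+g(y))$ the errors at the three relevant value-profiles of $(f(x),f(y))$ are $0$ (target $0$), $|1-\alpha|$ (target $1$, one of them on), and $|1-2\alpha|$ (target $1$, both on). Choosing $\alpha = 2/3$ makes both nonzero errors equal to $1/3>1/4$, which is still not good enough, so a single additive rescaling cannot succeed; this tells me an extra ingredient is needed.

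The extra ingredient is to also use $g(x)g(y)$ but observe that $g(x)g(y)$ \emph{is} a conical junta of degree $2d$ — wait, that would blow up the degree. Better: note we only need degree $\ndeg(f)$, not $2\ndeg(f)$, so I cannot afford the product. Instead I would exploit that $g$ computes the boolean function $f$, hence $g(x)^2 = g(x)$ pointwise, and more usefully that for boolean $f$ we have the identity, valid pointwise on $\bool^{2n}$,
\[
f(x)\lor f(y) \;=\; \tfrac{3}{4}\bigl(f(x)+f(y)\bigr) \;+\; \tfrac14\bigl(1 - f(x)\bigr)\bigl(1-f(y)\bigr)\cdot(\text{correction}),
\]
and then \emph{drop} the correction term, absorbing its magnitude into the error budget. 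Checking the four boolean profiles of $(f(x),f(y))$: at $(0,0)$, $\tfrac34(f(x)+f(y))=0$ vs.\ target $0$; at $(1,0)$ or $(0,1)$, value $\tfrac34$ vs.\ target $1$, error $\tfrac14$; at $(1,1)$, value $\tfrac32$ vs.\ target $1$, error $\tfrac12$. Still the both-ones case fails.

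Given the recurring failure of the both-ones case under additive juntas, the right move — and this is the step I expect to be the crux — is to use a conical junta that \emph{caps} at $1$ by incorporating the negation. Recall from \cref{cl:or}'s companion direction that $\neg f$ need not have small nonnegative degree, so we cannot symmetrize that way. Instead I would build the approximator directly from the DNF structure: writing $f=C_1\lor\cdots\lor C_m$ as an \emph{unambiguous} DNF is not assumed here (only $\ndeg(f)=d$), but the conical junta $g=\sum w_i C_i=f$ together with $\sum_x$-type reasoning lets us form $\tfrac12(g(x)+g(y)) + \tfrac14 \cdot \mathbf{1}$ as a conical junta of degree $d$ (the constant $\mathbf 1$ is a width-$0$ conjunction): at profile $(0,0)$ this gives $\tfrac14$ (error $\tfrac14$), at $(1,0)/(0,1)$ gives $\tfrac34$ (error $\tfrac14$), at $(1,1)$ gives $\tfrac54$ (error $\tfrac14$). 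Every case is within $1/4$, all coefficients are nonnegative, and the degree is $\max(d,0)=d=\ndeg(f)$. Hence $h(x,y)\coloneqq \tfrac12 g(x)+\tfrac12 g(y)+\tfrac14$ is a conical $\ndeg(f)$-junta with $\|h-\ORf\|_\infty\le 1/4$, giving $\andeg{1/4}(\ORf)\le\ndeg(f)$ as required. I would therefore present the proof as: take $g$ an exact conical $d$-junta for $f$; set $h=\tfrac12 g(x)+\tfrac12 g(y)+\tfrac14$; verify nonnegativity, degree, and the $1/4$ bound by the four-case check above.
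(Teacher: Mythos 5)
Your final construction $h(x,y)=\tfrac12 g(x)+\tfrac12 g(y)+\tfrac14$ is exactly the paper's approximator $(f(x)+f(y))/2+1/4$, verified by the same case analysis over the value profiles of $(f(x),f(y))$, so the proposal is correct and takes essentially the same approach (the intermediate failed attempts are just scaffolding). The only cosmetic difference is that you phrase it via an exact conical junta $g$ for $f$ rather than directly in terms of $f$, which changes nothing.
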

\begin{proof}
Let $g\colon \bool^{2n}\rightarrow \bR_{\geq 0}$ be given by $g(x,y)\coloneqq (f(x)+f(y))/2+1/4$. Then 
\begin{equation*}
    g(x,y) ~=~
    \begin{cases}
    1/4 & \text{if }f(x)=f(y)=0, \\
    5/4 & \text{if }f(x)=f(y)=1, \\
    3/4 & \text{otherwise}.
    \end{cases}
\end{equation*}
Thus $g$ is a $1/4$-approximation to $\ORf$. Note also that $\ndeg(g)\le \ndeg(f)$, as desired.
\end{proof}

We can now repeat the same idea for nonnegative rank. In \cref{sec:proof-2} we constructed a boolean matrix (two-party function) $F$ such that $\log\nrank(F)\leq \Una_1(F) \leq m$ and $\log\anrank{10^{-5}}(F^\lor)\geq \tOmega(m^2)$. We claim that $\log\anrank{1/4}(F^\lor)\leq O(m)$, which would finish the proof of \cref{thm:error}. Indeed, analogously to \cref{cl:or}, we can define a nonnegative matrix by $G(xx,yy')\coloneqq (F(x,y)+F(x',y'))/2+1/4$. This is a $1/4$-approximation to $F^\lor$ and we have $\nrank(G)\leq 2\cdot\nrank(F)+1\leq 2^{m+1}+1$, as claimed.


\end{document}